\DeclareRobustCommand{\SkipTocEntry}[5]{}
\setlist{itemsep=.5\baselineskip,topsep=.5\baselineskip}
\numberwithin{equation}{section}
\theoremstyle{plain}
\newtheorem{theorem}{Theorem}[section]
\newtheorem{lemma}[theorem]{Lemma}
\newtheorem{prop}[theorem]{Proposition}
\newtheorem{example}[theorem]{Example}
\newtheorem{defn}[theorem]{Definition}
\newtheorem{cor}[theorem]{Corollary}
\newtheorem{problem}[theorem]{Problem}
\newcommand{\iso}{\cong}
\newcommand{\arr}{\rightarrow}
\newcommand{\R}{\mathbb{R}}
\newcommand{\C}{\mathbb{C}}
\newcommand{\Z}{\mathbb{Z}}
\newcommand{\Id}{\mathbbm{1}}
\newcommand{\mcF}{\mathcal{F}}
\newcommand{\mcP}{\mathcal{P}}
\newcommand{\ang}[1]{\langle #1 \rangle}
\title{Operator solutions of linear systems and small cancellation}
\author[William Slofstra]{William Slofstra$^{1,2}$} 
\author[Lu-Ming Zhang]{Lu-Ming Zhang$^{3}$}
\address[1]{Institute for Quantum Computing, University of Waterloo, Canada}
\address[2]{Department of Pure Mathematics, University of Waterloo,
Canada}
\address[3]{Department of Mathematics, London School of Economic and Political Science, UK}
\begin{document}
\begin{abstract}
    We show that if a graph has minimum vertex degree at least $d$ and girth at
    least $g$, where $(d,g)$ is $(3,6)$ or $(4,4)$, then the incidence system
    of the graph has a (possibly infinite-dimensional) quantum solution over
    $\Z_p$ for every choice of vertex weights and integer $p \geq 2$. In
    particular, there are linear systems over $\Z_p$, for $p$ an odd prime,
    such that the corresponding linear system nonlocal game has a perfect
    commuting-operator strategy, but no perfect classical strategy.
\end{abstract}

\maketitle

\section{Introduction}

Suppose $Ax=b$ is a linear system over $\Z_p = \Z / p \Z$, where $p \geq 2$ is
an integer (not necessarily prime). In the linear system nonlocal game
based on $Ax=b$, two players (commonly called Alice and Bob) are
each given an equation from the system chosen uniformly at random, and must
reply with an assignment to the variables which appear with non-zero
coefficients in their given equation.  The players win the game if their
assignments satisfy the given equations, and agree on the variables which
appear in both equations. The players are not able to communicate during the
game, and do not know what equation the other player received. However, they
can meet in advance and decide on a strategy which maximizes their winning
probability. It's not hard to see that Alice and Bob can play perfectly (i.e.
win with probability one) with a deterministic strategy if and only if the
system $Ax=b$ has a solution. The same is true if Alice and Bob use shared
randomness. In quantum information, we are interested in the maximal winning
probability when Alice and Bob have access to entanglement. Cleve and Mittal
\cite{CM14} show that Alice and Bob can play perfectly with a
finite-dimensional entangled quantum state if and only if $Ax=b$ has a
finite-dimensional operator solution, in the following sense:
\begin{defn}
    Let $Ax=b$ be an $m \times n$ linear system over $\Z_p$, $2 \leq p < +\infty$, and let
    $\omega_p := e^{2\pi i / p}$. An \textbf{operator solution} to $Ax=b$ is a
    collection of unitary operators $X_1,\ldots,X_n$ on a Hilbert space $H$
    such that
    \begin{enumerate}[(1)]
        \item $X_i^{p} = \Id$ for all $i=1,\ldots,n$,
        \item $\prod_j X_j^{A_{ij}} = \omega_p^{b_i} \Id$ for all $1 \leq i \leq m$, and
        \item $[X_j, X_k] = \Id$ for all $1 \leq i \leq m$ and $1 \leq j,k \leq n$
            with $A_{ij}$, $A_{ik} \neq 0$,
    \end{enumerate}
    where $[A,B] := A B A^{-1} B^{-1}$ is the group commutator.  An operator
    solution is \textbf{finite-dimensional} if $H$ is finite-dimensional.
\end{defn}
For the commuting operator model (a more general model for entangled systems
based on the maximal tensor product of $C^*$-algebras), Cleve, Liu, and the
first author show that the linear system game associated to $Ax=b$
has a perfect commuting operator strategy if and only if $Ax=b$ has an operator
solution (possibly infinite-dimensional) \cite{CLS}. While this result in
\cite{CLS} is claimed for all integers $p$, the details are given only for
$p=2$. A complete development for all $p$ can be found in Qassim-Wallman
\cite{QW}. Both approaches use the solution group of the system $Ax=b$, also
introduced in \cite{CLS}:
\begin{defn}\label{def:solngroup}
    Suppose $A$ is an $m \times n$ integer matrix, $b \in \Z^m$, and $2 \leq p \leq +\infty$
    is an integer. The \textbf{solution group} of $Ax=b$ over $\Z_p$ is the finitely
    presented group $\Gamma_p(A,b)$ generated by $x_1,\ldots,x_n$, $J$,
    satisfying the relations
    \begin{enumerate}[(1)]
        \item\label{I:orderrels} $J^p = 1$ and $x_i^p=1$ for all $1 \leq i \leq n$, 
        \item $\prod_{j=1}^n x_j^{A_{ij}} = J^{b_i}$ for all $1 \leq i \leq m$, 
        \item $[x_j,x_k] = 1$ for all $1 \leq i \leq m$ and $1 \leq j,k \leq n$
            with $A_{ij}$, $A_{ik} \neq 0$, and
        \item $[x_j, J]=1$ for all $1 \leq i \leq n$.
    \end{enumerate}
    When $p = \infty$, the relations in (1) are omitted, and we let $\Z_{\infty} := \Z$.
\end{defn}
Note that there is no loss of generality in assuming that $A$ and $b$ have
integer entries, since we can always pick integer representatives for elements
of $\Z_p$. Using integer entries in this definition allows us to talk about
$\Gamma_p(A,b)$ for different values of $p$ with $A$ and $b$ fixed, including
$p = \infty$. For $2 \leq p < +\infty$, an operator solution of $Ax=b$ over
$\Z_p$ is a representation of $\Gamma_{p}(A,b)$ where $J \mapsto \omega_p \Id$.
Consequently $Ax=b$ has an operator solution over $\Z_p$ if and only if $J$ has
order $p$ in $\Gamma_p(A,b)$. We are not aware of a game interpretation of
$\Gamma_{\infty}(A,b)$, but it is still an interesting algebraic object. 

When $p=2$, the Mermin-Peres magic square and pentagram are famous examples of
linear systems which have no classical solution, but do have a
finite-dimensional operator solution \cite{Me90,Peres91}. There are also examples for $p=2$ of
linear systems which have no finite-dimensional operator solution, but do have
infinite-dimensional operator solutions \cite{Sl20,Sl19}. It's natural to
ask whether there are similar examples for $p > 2$, but it's been surprisingly
hard to find such examples. Qassim and Wallman prove that if $p > 2$ and $Ax=b$
does not have a classical solution, then $Ax=b$ does not have an operator
solution in the generalized Pauli group \cite{QW}. Similar results for
extraspecial $p$-groups and other specific finite groups have been given by
Chung, Okay, and Sikora \cite{COS} and Frembs, Okay, and Chung \cite{FOC}.

The main result of this paper is a sufficient condition for showing that $Ax=b$
has an operator solution over $\Z_p$: 
\begin{theorem}\label{T:main}
    Suppose $2 \leq p \leq +\infty$, and let $A$ be an $m \times n$ integer matrix
    such that every non-zero entry of $A$ is not a zero divisor in $\Z_p$.
    Let $H(A)$ be the hypergraph with vertex set $[m] := \{1,\ldots,m\}$, edge
    set $[n]$, and vertex $i$ incident to edge $j$ if $A_{ij} \neq 0$. If
    $H(A)$ has minimum vertex degree $\geq d$ and girth $\geq g$, where $(d,g)
    \in \{(4,4),(3,6)\}$, then $|J| = p$ in $\Gamma_p(A,b)$ for all $b \in
    \Z^m$.
\end{theorem}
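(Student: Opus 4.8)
The plan is to analyze the solution group $\Gamma_p(A,b)$ by small cancellation theory, in a form adapted to the fact that its defining presentation is not over a free group but carries many commuting and torsion relators; the goal is to show that no reduced van Kampen diagram can have boundary word a power $J^k$ with $0<k<p$ (or $k\neq 0$ when $p=\infty$). The key reformulation is this: since the variables incident to a common vertex $i$ of $H(A)$ pairwise commute, the relator $r_i:=\prod_j x_j^{A_{ij}}J^{-b_i}$ of equation $i$ lies in the abelian subgroup generated by those $x_j$ together with $J$, so $\Gamma_p(A,b)$ is a quotient of the graph product $\Pi$ of the cyclic groups $\langle x_j\mid x_j^p\rangle$ over the co-occurrence graph of the variables, times the central factor $\langle J\mid J^p\rangle$, by the normal closure of $\{r_i\}$. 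Running small cancellation over $\Pi$ rather than over a free group absorbs the relators $[x_j,x_k]$, $[x_j,J]$, $x_j^p$, $J^p$ into the ambient group, so that the only relators one tracks are the $r_i$; in this framework a piece common to $r_i$ and $r_{i'}$ is carried by the variables shared by equations $i$ and $i'$, and a cyclic chain $r_{i_1},\dots,r_{i_q}$ in which consecutive relators share a variable is precisely a length-$q$ cycle of $H(A)$. One also records the local fact that $J$ has order $p$ in each vertex group $G_i=\bigl(\bigoplus_{j:\,A_{ij}\neq 0}\Z_p\oplus\Z_p J\bigr)\big/\langle(A_{i\bullet},-b_i)\rangle$, i.e.\ that the single equation $i$ is solvable over $\Z_p$.

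The technical heart is to show that $(d,g)\in\{(4,4),(3,6)\}$ makes this presentation satisfy a $C(d)$--$T(g)$ small cancellation condition. For the $C(d)$-part---no $r_i$ is a product of fewer than $d$ pieces---the point is that girth $\geq g\geq 4$ forces two distinct equations to share at most one variable, so a single piece of $r_i$ can span at most one of its variable-syllables; since $r_i$ has $\deg_{H(A)}(i)\geq d$ such syllables, it is not a product of fewer than $d$ pieces. The $T(g)$-part---in a reduced diagram no interior vertex meets fewer than $g$ two-cells---is, via the correspondence above, exactly the hypothesis that $H(A)$ has no cycle of length less than $g$. Since $(d,g)$ is $(4,4)$ or $(3,6)$, the admissible diagrams are modeled on the Euclidean tilings $\{4,4\}$ and $\{3,6\}$, hence non-positively curved; a combinatorial Gauss--Bonnet (discharging) argument then shows that every reduced disc diagram over the presentation carries all of its curvature (total $2\pi$) on the boundary, with no interior contribution.

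To conclude, suppose $J^k=1$ with $0<k<p$, witnessed by a reduced disc diagram $D$ with $\partial D=J^k$. Since $\partial D$ contains no $x_j$-letter, a boundary two-cell labelled by some $r_i$ would meet $\partial D$ only along $J$-letters, forcing all of its $x$-syllables to cancel inside $D$; tracing these cancellations produces a cycle of $H(A)$ of length less than $g$, against the girth hypothesis. The curvature bound then forces $D$ to contain no $r_i$ at all, so $J^k$ is already trivial in $\Pi$; as $J$ has order $p$ in $\Pi$, this gives $p\mid k$, a contradiction. (Equivalently, each $G_i$ embeds in $\Gamma_p(A,b)$ and $J$ has order $p$ there.) Hence $|J|=p$. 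The case $p=\infty$ is identical, with $\Pi=\mathrm{RAAG}\times\Z$ and no torsion relators.

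The step I expect to be hardest is establishing the $C(d)$--$T(g)$ condition and carrying out the curvature bookkeeping in the two borderline flat cases $\{4,4\}$ and $\{3,6\}$, where there is no curvature to spare: one must verify the piece-versus-relator inequalities precisely, confirm that the only reductions available in a diagram are those forced by the combinatorics of $H(A)$---and not by arithmetic coincidences among the entries $A_{ij}$ and the weights $b_i$---and fold the torsion relators $x_j^p$ into the small-cancellation count without breaking the condition.
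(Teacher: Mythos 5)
Your plan correctly identifies the right intuition (the theorem is an analog of $C(4)$--$T(4)$ and $C(3)$--$T(6)$ small cancellation, with the commuting and torsion relations absorbed into an ambient structure and the hypergraph girth/degree supplying the piece/face conditions), but the central technical content is asserted rather than proved, and it is exactly the part that does not come for free. Your relators $r_i$ each lie inside a single abelian clique subgroup of the graph product $\Pi$, so classical (or free-product/graph-product) small cancellation does not apply off the shelf: pieces can be rearranged arbitrarily by commutation, every power of the central $J$ is a piece common to all relators, and---most importantly---the classical notion of a ``reduced diagram'' (no adjacent mirror-image cells) is too weak here. Two adjacent two-cells carrying the \emph{same} relator $r_i$, or a chain of cells whose shared syllables involve repeated variables, are not excluded by reducedness and do not correspond to any short cycle of $H(A)$; they must be removed by bespoke surgeries. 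This is precisely what the paper's closed-picture formalism and the minimality lemma (\Cref{lem:minimal}) accomplish: contracting two equal-labelled vertices lying on a common face, deleting cycles all of whose edges carry the same variable, and eliminating degree-$2$ and degree-$3$ vertices using the specific structure of the local groups $\Gamma_p(A,b,i)$ (a proper subset of an equation's variables generates a subgroup meeting $\ang{J}$ trivially, forcing $k(v)=0$ there). Your statement that ``$T(g)$ is exactly the girth hypothesis'' silently assumes all these configurations are already excluded; your own closing paragraph flags this as the hardest step, but the proposal does not carry it out, and without it the $C(d)$--$T(g)$ condition you invoke is not verified.

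There is also a concrete flaw in your endgame. From a boundary two-cell of $D$ meeting $\partial D$ only in $J$-letters you conclude that ``tracing these cancellations produces a cycle of $H(A)$ of length less than $g$''---this does not follow: each of its $x$-syllables being matched with some neighbouring cell yields no short cycle by any local tracing; what is actually needed is the global Euler-characteristic/Greendlinger-type count, which again presupposes the unestablished framework, and with boundary word $J^k$ (a single syllable of the central factor) the boundary combinatorics of a relative diagram is itself nonstandard. The paper sidesteps all boundary issues by working with \emph{closed} pictures and a phase: the van Kampen lemma reduces $J^k=1$ to the existence of a closed picture of phase $k$, minimality plus \Cref{lem:minimal}, \Cref{lem:girth} and \Cref{lem:simplecycleinface} give vertex degree $\geq d$ and face size $\geq g$, and \Cref{lem:smallcancellationgraph} (Euler's formula with $2/d+2/g=1$) forces the minimal picture to be empty, hence $k=0$. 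To turn your outline into a proof you would have to build the relative small cancellation theory over $\Pi$ with clique-supported relators from scratch---defining pieces, the correct replacement for reducedness, and a curvature statement---which is essentially equivalent to redoing the paper's picture calculus; as written, the proposal has a genuine gap at that point and an incorrect step in the final argument.
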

In this theorem, a hypergraph is a triple $(V,E,I)$, where $V$ and $E$ are
finite sets called the vertex and edge sets, and $I$ is a binary relation
between $V$ and $E$ called the incidence relation. The degree of a vertex $v$
is the number of edges incident to $v$. A Berge cycle of length $k$ in $H$ is a
sequence $(v_1,e_1,\ldots,v_k,e_k,v_{k+1})$ such that $k \geq 2$, $v_1,\ldots,v_k$
is a sequence of distinct vertices, $v_{k+1}=v_1$, $e_1,\ldots,e_k$ is a
sequence of distinct edges, and $e_i$ is incident to $v_i$ and $v_{i+1}$ for
all $1 \leq i \leq k$. If $H$ is an ordinary graph, then a Berge cycle is the
same as an ordinary cycle. The girth of a hypergraph is the minimum length of a
Berge cycle, assuming one exists. If no cycle exists, then the girth is
infinite. If the entries of $A$ are zero or $\pm 1$, then the requirement that
the non-zero entries be non-zero divisors is satisfied for all $p$.

From Theorem \ref{T:main}, it is not hard to find examples of linear systems
$Ax=b$ which have operator solutions over $\Z_p$, but no classical solutions.
\begin{example}\label{ex:hypergraph}
    A hypergraph is $d$-regular if every vertex is incident to $d$ edges, and
    $p$-uniform if every edge is incident to $p$ vertices. Let $H$ be a 
    $d$-regular $p$-uniform hypergraph of girth $\geq g$ with vertex set $V
    = [m]$ and edge set $E = [n]$, where $d \geq d_0$
    for some $(d_0,g) \in \{(4,4), (3,6)\}$. Such hypergraphs exist by \cite{EL14}. 
    Let $A$ be the $m \times n$ matrix where $A_{ij} = 1$ if vertex $i$ is
    incident to edge $j$, and $A_{ij} = 0$ otherwise. Let $b \in \Z^m$ be any
    vector such that $\sum_i b_i$ is non-zero in $\Z_p$. By \Cref{T:main},
    $Ax=b$ has an operator solution over $\Z_p$. However, since $H$ is
    $p$-uniform, every column of $A$ has exactly $p$ entries, so if
    $\overline{e}$ is the vector of all ones, then $\overline{e}^T \cdot A = 0$
    in $\Z_p$. Hence $Ax=b$ does not have a solution in $\Z_p^n$. 
\end{example}

\begin{example}\label{ex:incidence}
    Let $(d,g) \in \{(4,4), (3,6)\}$, and let $G$ be a graph of minimum vertex
    degree $\geq d$ and girth $\geq g$ with vertex set $[m]$ and edge set
    $[n]$. Turn $G$ into a digraph $D$ by picking directions for the edges of
    $G$ arbitrarily (see Lemma \ref{lem:changeorientation} for why the choice
    of edge directions doesn't matter). The incidence
    matrix of $D$ is the $m \times n$ matrix $I(D)$ defined by 
    \begin{equation*}
        I(D)_{ij} = \begin{cases} 1 & t(j) = i \\
                        -1 & s(j) = i \\
                        0 & \text{otherwise} \end{cases},
    \end{equation*}
    where $s(j)$ is the source of edge $j$, and $t(j)$ is the target of edge $j$.
    If $\overline{e}$ is the vector of all ones, 
    then $\overline{e}^T I(D) = 0$ in $\Z$. So if $I(D)x=b$ has a solution in $\Z_p$, 
    then $\overline{e}^T b = \sum_i b_i = 0$ in $\Z_p$. In fact, it is not hard
    to see that if $G$ is connected, then $I(D) x = b$ has a solution in $\Z_p$
    if and only if $\sum_i b_i = 0$ in $\Z_p$. In contrast, $H(I(D)) = G$, so
    by \Cref{T:main}, $|J|=p$ in $\Gamma_p(I(D),b)$ for all $2 \leq p \leq +\infty$
    and $b \in \Z_p^n$.
\end{example}
Examples \ref{ex:hypergraph} and \ref{ex:incidence} show that there are linear
system games over $\Z_p$ for $p > 2$ which have perfect commuting operator
strategies but no perfect classical strategies. We do not know if these
examples have finite-dimensional operator solutions. Hence it remains an open
problem to determine if there are linear system games over $\Z_p$ for $p > 2$
which have perfect finite-dimensional quantum strategies, but does not have a
perfect classical strategy.  The following simple example shows why we require
the non-zero entries of $A$ to be non-zero divisors in \Cref{T:main}:
\begin{example}\label{ex:zerodivisor}
    Continue with the digraph $D$ from the previous example, and consider the
    system $2 I(D) x = b$ over $\Z_4$, for any vector $b$ such that $2b \neq 0$
    in $\Z_4$. It's easy to see that this system has no classical solutions,
    since if $x_0$ is a solution, then multiplying both sides of the equation by $2$ 
    gives $0 = 4 I(D)x_0 = 2b \neq 0$, a contradiction. Similarly, by squaring
    both sides of the relations (2) in \Cref{def:solngroup}, we see that $J^2=1$
    in $\Gamma_4(2I(D),b)$. So the conclusion of \Cref{T:main} does not hold, 
    even though $H(2I(D)) = G$. 
\end{example}

The proof of \Cref{T:main} is given in \Cref{sec:pictures}, and is based on
combinatorial small cancellation theory. In small cancellation theory, a group
presentation $K = \ang{S : R}$ is said to be symmetrized if every element of
$R$ is cyclically reduced, and $R$ is closed under inverses and cyclic
reorderings. A \textbf{piece} of a relation is a word $u \in \mcF(S)$ which
appears as an initial subword of two distinct elements of $R$. A symmetrized
presentation is said to satisfy the $C(d)$ condition if every element of $R$
cannot be written as $u_1 \cdots u_k$ for pieces $u_1,\ldots,u_k$ for any $k <
d$. A symmetrized presentation is said to satisfy the $T(g)$ condition if for
any sequence $r_0,\ldots,r_{k-1},r_k=r_0 \in R$ with $3 \leq k < g$, 
either $r_i r_{i+1}$ is reduced or $r_i = r_{i+1}^{-1}$ for some $0 \leq i \leq
k-1$. The point of these conditions is that if $C(d)$ holds,
then every vertex in a \textbf{picture} of $K$ has degree $\geq d$, and if
$T(g)$ holds, then every face has degree $\geq g$ (we refer to \cite{LS77,Sh07}
for more background on the terminology used here). If one of the hypotheses
$C(6)$, $C(4)$ and $T(4)$, or $C(3)$ and $T(6)$ holds, then small cancellation
theory can be applied to, for instance, solve the word problem of $K$.  

Unfortunately, even if we ignore the relations in (1) and (4), the presence of
the commuting relations in (3) means that none of these small cancellation
hypotheses hold for the presentation of $\Gamma_p(A,b)$ in
\Cref{def:solngroup}. We can also try replacing the relations (2) and (3) with
the equivalent relations $\prod_{j=1}^n x_{\sigma(j)}^{A_{i\sigma(j)}} =
J^{b_i}$ as $\sigma$ ranges across all permutations of $n$, but then any
subword of one of these relations is a piece, and when the relations (4) are
included, none of the small cancellation hypotheses will hold.
The main observation of this paper is that we can get around these problems by
using a variant of the pictures for solution groups introduced in \cite{Sl20}.
This allows us to conclude that $|J| = p$ in $\Gamma_p(A,b)$ when the
appropriate analogs of the $C(4)$---$T(4)$ or $C(3)$---$T(6)$ small
cancellation conditions are satisfied.  We do not know if there is an analog of
$C(6)$ small cancellation in this context. Although we give an elementary proof
of \Cref{T:main}, we suspect that it is also possible to prove this theorem
using complexes of groups as in \cite{BH}. We leave this for future work.

Among linear system games, we are particularly interested in the incidence
systems of graphs $G$ described in \Cref{ex:incidence}. For instance, the
Mermin-Peres magic square is an incidence system of the complete bipartite
graph $K_{3,3}$, and the magic pentagram is an incidence system for the complete
graph $K_5$. When $p=2$, Arkhipov's theorem states that there is a vector $b$
such that $I(G)x=b$ has an operator solution but no classical solution if and
only if $G$ is nonplanar, or in other words if $G$ contains $K_{3,3}$ or $K_5$
as a graph minor. Paddock, Russo, Silverthorne, and the first author
\cite{PRSS} show that any quotient closed property of solution groups can be
characterized by forbidden minors. The results are stated for $p=2$, but the
proofs can be easily extended to $\Gamma_p(I(G),b)$ for $2 \leq p \leq +\infty$.
In particular, the condition that $|J|=p$ in $\Gamma_p(I(G),b)$ can be
characterized by forbidden minors. Finding the minors explicitly seems
like an interesting open problem. We make some remarks on this in
\Cref{sec:incidence}. 

\subsection{Acknowledgements}

We thank Josse van Dobben de Bruyn, Cihan Okay, Denis Osin, David Roberson, and
Turner Silverthorne for helpful conversations and comments.  This work was
supported by NSERC DG 2018-03968 and an Alfred P. Sloan Research Fellowship.

\section{Proof of \Cref{T:main}}\label{sec:pictures}
In this section we define a version of closed pictures (meaning pictures without
boundary) for solution groups,
and use them to prove \Cref{T:main}. A picture is a plane digraph, or in other
words a directed graph embedded in the plane $\R^2$. If $e$ is an edge in a
digraph, we let $s(e)$ denote the source vertex, and $t(e)$ denote the target
vertex. We say that $v$ and $e$ are incident if $s(e) = v$ or $t(e) = v$.
Digraphs are allowed to have loops, which are edges $e$ with $s(e) = t(e)$. It
is often convenient when defining pictures to allow closed loops, which are
edges which aren't incident to any vertex. However, these are not necessary for
the argument below. We continue with the notation from the introduction that
$[n] := \{1,\ldots,n\}$ and $\Z_{\infty} := \Z$. 
\begin{defn}
    Suppose $Ax=b$ is an $m \times n$ linear system over $\Z$, and 
    $2 \leq p \leq +\infty$. For all $i \in [m]$, let 
    $R_i(A) = \{j \in [n] : A_{ij} \neq 0\}$, and let 
    \begin{align*}
         \Gamma_p(A,b,i) := {\Big\langle} J, x_j, j \in R_i(A)\ :\  
                & \prod_{j \in R_i(A)} x_j^{A_{ij}} = J^{b_i}, \\
                & J^p = x_j^p = [x_j,J] = [x_j,x_k] = 1 \text{ for all } j,k \in R_i(A) 
                {\Big\rangle},
    \end{align*}
    where we drop the relations $J^p = x_j^p = 1$ if $p = \infty$. A
    \textbf{closed picture} for the solution group $\Gamma_p(A,b)$ is a plane
    digraph $\mcP = (V,E)$, along with functions $h_E : E \arr [n]$, $h_V : V
    \arr [m]$, $a : E \arr \Z_p$, and $k : V \arr \Z_p$, such that
    \begin{enumerate}[(a)]
        \item if $e$ is incident to $v$ then $h_E(e) \in R_{h_V(v)}(A)$, and
        \item for all vertices $v \in V$,
            \begin{equation}\label{eq:vertex}
                \prod_{\substack{e \in E \\ t(e) = v}} x_{h_E(e)}^{a(e)} \cdot 
                    \prod_{\substack{e \in E \\ s(e) = v}} x_{h_{E}(e)}^{-a(e)} = J^{k(v)}
            \end{equation}
            in $\Gamma_p(A,b,{h_V(v)})$. 
    \end{enumerate}
    The \textbf{phase} of a picture $\mcP$ is $k = \sum_{v \in V} k(v) \in \Z_p$. A picture 
    is \textbf{minimal} if there is no other picture with the same phase but fewer vertices and edges. 
\end{defn}

We refer to the functions $h_E$, $h_V$, $a$, and $k$ as the labels of a picture
$\mcP$. For convenience, we write $h$ for $h_V$ and $h_E$, since we can tell
which one is meant by whether the argument is a vertex or edge. The point of
pictures is:
\begin{lemma}[van Kampen lemma]
    Let $Ax=b$ be a linear system over $\Z$ and let $2 \leq p \leq +\infty$. If
    $k \in \Z_p$, then $J^k = 1$ in $\Gamma_p(A,b)$ if and only if there is a
    closed picture of phase $k$.
\end{lemma}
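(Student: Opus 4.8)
The statement is the van Kampen lemma for this particular flavour of pictures, and the plan is to run the two standard directions, the only wrinkle being that the generator $J$ is carried by the vertex labels $k(v)$ rather than by the edges of $\mcP$. Throughout one uses that $J$ is central in $\Gamma_p(A,b)$ (relations (4)), that the generators $x_j,x_k$ with $j,k\in R_i(A)$ commute there (relations (3)), and that for each $i$ there is a homomorphism $\Gamma_p(A,b,i)\arr\Gamma_p(A,b)$ fixing $J$ and the $x_j$ with $j\in R_i(A)$; in particular the relation \eqref{eq:vertex} at a vertex $v$, which by hypothesis holds in $\Gamma_p(A,b,h_V(v))$, also holds in $\Gamma_p(A,b)$, where the order of the product is then immaterial.

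Suppose first that $\mcP$ is a closed picture of phase $k$; the plan is to read off $J^k=1$ by a sweep-line argument. After a small perturbation, assume $\mcP$ is in general position with respect to the vertical lines $\ell_t=\{x=t\}$, so that distinct vertices have distinct $x$-coordinates and the edges have only finitely many vertical tangencies. For $t$ with $\ell_t$ missing every vertex, let $w(t)\in\Gamma_p(A,b)$ be the product, read in order of increasing $y$, of $x_{h_E(e)}^{\pm a(e)}$ over the points where $\ell_t$ crosses an edge $e$, the sign being determined by the orientation of $e$. For $|t|$ large $w(t)$ is the empty product, hence equals $1$. As $t$ increases, $w(t)$ is locally constant except at finitely many critical values, where it changes only by: (i) at a vertical tangency of an edge $e$, insertion or deletion of $x_{h_E(e)}^{a(e)}x_{h_E(e)}^{-a(e)}=1$; and (ii) at the $x$-coordinate of a vertex $v$, left multiplication by $J^{k(v)}$ — this is where \eqref{eq:vertex} and the pairwise commutativity of the generators incident to $v$ come in, and it is the single step that calls for a routine check of signs. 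Sweeping $t$ from $-\infty$ to $+\infty$ and using that $J$ is central, $1=\big(\prod_v J^{k(v)}\big)\cdot 1=J^k$. (No connectedness of $\mcP$ is needed.)

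Conversely, suppose $J^k=1$ in $\Gamma_p(A,b)$; the plan is to build a closed picture of phase $k$ by the dual form of the usual van Kampen construction. Write $J^k=\prod_{\ell=1}^N g_\ell r_\ell^{\varepsilon_\ell}g_\ell^{-1}$ in the free group on $x_1,\dots,x_n,J$, with each $r_\ell$ one of the defining relators (1)--(4). For each factor form a subpicture: a single vertex $v_\ell$ with dangling $x_j$-edge-ends spelling the ``$x$-part'' of $r_\ell^{\varepsilon_\ell}$, with $h_V(v_\ell)$ any row containing the relevant variables (equal to $i$ for a type-(2) relator of row $i$), and with $k(v_\ell)$ chosen so that \eqref{eq:vertex} becomes precisely $r_\ell^{\varepsilon_\ell}$ read around $v_\ell$, as a relation of $\Gamma_p(A,b,h_V(v_\ell))\iso\Z_p^{R_{h_V(v_\ell)}(A)}$: one takes $k(v_\ell)=\pm b_i$ for a type-(2) relator and $k(v_\ell)=0$ for relators of types (1), (3), (4), which hold automatically in the abelian group $\Gamma_p(A,b,h_V(v_\ell))$, the $J^{\pm 1}$'s in a type-(4) relator and the $J^{\pm p}$ in the type-(1) relator $J^p$ being absorbed into (or, mod $p$, vanishing from) the label. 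Now place the $N$ subpictures in the plane and splice their dangling edge-ends in pairs following the pattern by which $\prod_\ell g_\ell r_\ell^{\varepsilon_\ell}g_\ell^{-1}$ freely reduces to $J^k$ — the conjugators $g_\ell$ becoming parallel bundles of arcs, each free cancellation becoming an identification of two edge-ends — so that, after the standard folding step, one is left with a plane digraph having no dangling edge-ends, i.e. a closed picture. It satisfies conditions (a) and (b) by construction, and applying the homomorphism $x_j\mapsto 0$, $J\mapsto 1$ to the free-group identity shows its phase equals $k$ or $-k$; negating every label $a(\cdot)$ and $k(\cdot)$ — which again gives a valid picture, of the negated phase, since \eqref{eq:vertex} is closed under inversion in the abelian groups $\Gamma_p(A,b,i)$ — we may assume the phase is $k$.

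I expect this second direction to be the main obstacle, and within it the handling of $J$. Since pictures carry no $J$-edges, every occurrence of $J$ in the identity $J^k=\prod_\ell g_\ell r_\ell^{\varepsilon_\ell}g_\ell^{-1}$ — in the conjugators, and in the $J^{\pm b_i}$, $J^{\pm p}$, $J^{\pm 1}$ inside relators of types (2), (1), (4) — has to be funnelled into the scalar vertex labels, and making this compatible with the free-reduction-and-folding pattern that produces an honestly closed plane digraph is the delicate part. One clean way to organize it is to first use centrality of $J$ (equivalently, insertions of $[x_j,J]^{\pm1}$) to rewrite the identity with all conjugators $g_\ell$ words in the $x_j$ alone, and then to collapse the $J$-edges of the resulting ordinary van Kampen diagram onto the incident $2$-cells, recording each collapsed edge as a unit of $k(\cdot)$.
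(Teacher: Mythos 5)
The paper states this lemma without proof---it is presented as the expected adaptation of the usual van Kampen lemma to the picture calculus of \cite{Sl20}---so there is no in-paper argument to compare with; judged on its own terms, your proposal is essentially correct and follows the standard route. The sweep-line direction is fine: the homomorphisms $\Gamma_p(A,b,i)\arr\Gamma_p(A,b)$, centrality of $J$, and the commutation relations (3) (all edges at a vertex $v$ carry labels in $R_{h(v)}(A)$) are exactly what is needed to see that passing a vertex changes the transversal word by $J^{\pm k(v)}$, and the deferred sign check is genuinely routine. For the converse, the organization you relegate to your final sentence is the one to make primary, since it avoids redoing the reduction-and-folding analysis with $J$-letters interleaved: by the classical van Kampen lemma there is a disc picture over the presentation of \Cref{def:solngroup} with boundary word $J^k$; no $x$-edge meets the boundary (the boundary word has no $x$-letters), so deleting every $J$-edge together with the boundary circle leaves a closed plane digraph whose edges are $x$-edges with $a(e)=\pm 1$; label a vertex coming from a row-$i$ relator by $i$ with $k(v)=\pm b_i$ according to the relator's orientation, and label vertices coming from $x_j^p$, $[x_j,x_k]$, $[x_j,J]$ by any row containing the variables involved, with $k(v)=0$. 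Then \eqref{eq:vertex} holds at every vertex because $\Gamma_p(A,b,i)$ is abelian, and counting signed $J$-edge-ends (each $J$-edge has two ends of opposite sign; the boundary supplies $k$, row vertices $\mp b_i$, commutator vertices $0$, and $J^p$-vertices $\pm p\equiv 0$ in $\Z_p$, such vertices being absent when $p=\infty$) shows the phase is $\pm k$, which your negation trick corrects to $k$.

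Two points in your write-up need tightening. First, ``recording each collapsed edge as a unit of $k(\cdot)$'' is wrong if read literally: a vertex labelled $[x_j,J]$ meets two $J$-edge-ends but must receive $k(v)=0$, so the recording must be signed---equivalently, $k(v)$ is determined by the $J$-exponent sum of the relator at $v$, as in your earlier per-relator assignment. Second, ``any row containing the relevant variables'' need not exist when some column of $A$ is zero; dispose of this case first via the retraction of $\Gamma_p(A,b)$ killing the generators $x_j$ of zero columns, which does not affect whether $J^k=1$ and reduces to a system where every variable occurs in some row.
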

We could also define pictures with boundary if we wanted to capture other
relations in the group, but closed pictures are sufficient for the proof of
\Cref{T:main}. Although the edge orientations are an important part of the data
of a $\Gamma(A,b)$ picture, we can change the orientations arbitrarily as long
as we also change the edge labels accordingly:
\begin{lemma}\label{lem:orientation}
    Let $S$ be a subset of the edges of a picture $\mcP$ of $\Gamma(A,b)$.
    Define a new picture $\mcP'$ from $\mcP$ by changing, for all $e \in S$,
    the orientation of the edge $e$ and the label of $a(e)$ of $e$ to $-a(e)$, and
    leaving all other edges and labels unchanged. Then $\mcP'$ is also a $\Gamma(A,b)$
    picture with the same phase as $\mcP$.
\end{lemma}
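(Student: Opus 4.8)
The plan is to verify directly that $\mcP'$ satisfies conditions (a) and (b) in the definition of a closed picture, using the fact that each local group $\Gamma_p(A,b,i) \iso \Z_p^{R_i(A)}$ is abelian, and then to observe that the phase is manifestly unchanged because the label $k$ is untouched.

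First, reversing the orientation of an edge changes neither the underlying plane graph nor which vertices an edge is incident to, and we leave $h_E$, $h_V$, and $k$ unchanged; hence condition (a)---that $h_E(e) \in R_{h_V(v)}(A)$ whenever $e$ is incident to $v$---continues to hold for $\mcP'$ verbatim. It remains to check the vertex relation \eqref{eq:vertex}.

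Next I would fix a vertex $v$ and compare the left-hand side of \eqref{eq:vertex} for $\mcP$ and for $\mcP'$. Every edge $e$ incident to $v$ with $e \notin S$ appears in the product for $\mcP'$ with the same role (source or target) and the same label $a(e)$ as in $\mcP$, so its contribution is unchanged. For an edge $e \in S$ incident to $v$: if $t(e) = v$ in $\mcP$, then $e$ contributes the factor $x_{h_E(e)}^{a(e)}$; in $\mcP'$ we have $s(e) = v$ and the new label is $-a(e)$, so $e$ contributes $x_{h_E(e)}^{-(-a(e))} = x_{h_E(e)}^{a(e)}$, the same factor. The case $s(e) = v$ in $\mcP$ is symmetric, and a loop at $v$ contributes $x_{h_E(e)}^{a(e)} x_{h_E(e)}^{-a(e)} = 1$ in both pictures. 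Since $\Gamma_p(A,b,h_V(v))$ is abelian, the order of the factors is irrelevant, so the two products agree as elements of $\Gamma_p(A,b,h_V(v))$; therefore \eqref{eq:vertex} holds at $v$ in $\mcP'$ with the unchanged value $k(v)$. Closed loops, if present, are incident to no vertex and so play no role. As $v$ was arbitrary, $\mcP'$ is a closed picture of $\Gamma(A,b)$, and since the function $k$ is literally the same for $\mcP$ and $\mcP'$, the phases $\sum_{v \in V} k(v)$ agree.

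There is no serious obstacle here; the only point requiring care is the sign bookkeeping in the source/target swap, which works precisely because negating $a(e)$ exactly compensates for moving the factor $x_{h_E(e)}^{\pm a(e)}$ between the ``target'' product and the ``source'' product in \eqref{eq:vertex}, and because the local relation group is abelian, so no reordering issue arises.
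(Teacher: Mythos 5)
Your verification is correct: the sign bookkeeping at each vertex (a reversed edge with negated label contributes the identical factor to \eqref{eq:vertex}, with abelianness of $\Gamma_p(A,b,h_V(v))$ disposing of ordering) and the observation that $k$ is untouched are exactly what is needed. The paper states this lemma without proof, treating it as immediate, and your argument is precisely the routine check it leaves implicit.
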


\begin{figure}[t]
    \begin{tikzpicture}[vertex/.style={circle,draw,thin,inner sep=2.5}, every node/.style={scale=.8}]

    \begin{scope}
        \node[vertex] (1) {1};
        \node[vertex] (2) [right=of 1] {2};
        \node[vertex] (3) [right=of 2] {3};
        \node[vertex] (a) [below=of 1] {a};
        \node[vertex] (b) [below=of 2] {b};
        \node[vertex] (c) [below=of 3] {c};

        \draw[-Stealth] (a) to (1);
        \draw[-Stealth] (a) to (2);
        \draw[-Stealth] (a) to (3);
        \draw[-Stealth] (b) to (1);
        \draw[-Stealth] (b) to (2);
        \draw[-Stealth] (b) to (3);
        \draw[-Stealth] (c) to (1);
        \draw[-Stealth] (c) to (2);
        \draw[-Stealth] (c) to (3);
    \end{scope}

    \begin{scope}[shift={++(7,-.5)}]
        \node[vertex] (1o) at (0:2) {1};
        \node[vertex] (ao) at (60:2) {a};
        \node[vertex] (2o) at (120:2) {2};
        \node[vertex] (bo) at (180:2) {b};
        \node[vertex] (3o) at (240:2) {3};
        \node[vertex] (co) at (300:2) {c};

        \node[vertex] (1i) at (180:1) {1};
        \node[vertex] (ai) at (240:1) {a};
        \node[vertex] (2i) at (300:1) {2};
        \node[vertex] (bi) at (0:1) {b};
        \node[vertex] (3i) at (60:1) {3};
        \node[vertex] (ci) at (120:1) {c};
        \draw[-Stealth] (ao) to (2o);
        \draw[-Stealth] (ao) to (1o);
        \draw[-Stealth] (bo) to (2o);
        \draw[-Stealth] (bo) to (3o);
        \draw[-Stealth] (co) to (3o);
        \draw[-Stealth] (co) to (1o);
        \draw[-Stealth] (ai) to (2i);
        \draw[-Stealth] (ai) to (1i);
        \draw[-Stealth] (bi) to (2i);
        \draw[-Stealth] (bi) to (3i);
        \draw[-Stealth] (ci) to (3i);
        \draw[-Stealth] (ci) to (1i);

        \draw[-Stealth] (ao) to (3i);
        \draw[-Stealth] (bo) to (1i);
        \draw[-Stealth] (co) to (2i);
        \draw[-Stealth] (ai) to (3o);
        \draw[-Stealth] (bi) to (1o);
        \draw[-Stealth] (ci) to (2o);
    \end{scope}
    \end{tikzpicture}
    \caption{The graph $K_{3,3}$ with edges oriented from top to bottom, and a
        picture for the incidence system of $K_{3,3}$ corresponding to the
        double cover of $K_{3,3}$.} \label{fig:K33picture}
\end{figure}
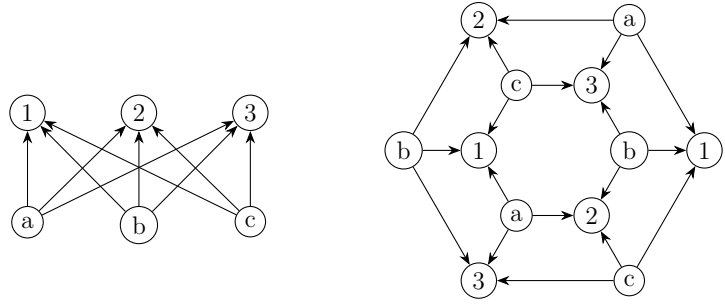

The following example shows how the van Kampen lemma can be used:
\begin{example}\label{ex:K33}
    Suppose $2 \leq p \leq +\infty$, and consider the graph $K_{3,3}$ shown on
    the left in \Cref{fig:K33picture}, with edges oriented from bottom to top 
    (this choice of orientation is arbitrary, but convenient).
    In \Cref{ex:incidence}, we assumed that the graph had vertex set $[m]$ and
    edge set $[n]$, but in examples it's convenient to allow other vertex and
    edge sets, and index the rows and columns of $I(D)$ by the vertices and
    edges respectively. In this case, we use vertex set $W = \{1,2,3,a,b,c\}$. 
    The Mermin-Peres magic square is the incidence system $I(K_{3,3})x = b$,
    where $b \in \Z^W$ is a vector with $\sum_{w \in W} b(w) = 1$.  Mermin and
    Peres originally showed that $J$ is non-trivial in $\Gamma_2(I(K_{3,3}),b)$.
    Qassim and Wallman show that $|J|<p$ in $\Gamma_p(I(K_{3,3}),b)$ for all
    $2 < p \leq +\infty$ \cite{QW}. This can be seen using the picture $\mcP$
    shown on the right of \Cref{fig:K33picture}.  In this picture, the
    underlying digraph is the double cover of $K_{3,3}$, with edges oriented to
    match the orientation from $K_{3,3}$. The functions $h_V$ and $h_E$ for
    this picture take values in the vertex and edge sets of $K_{3,3}$ respectively,
    since the row and columns of $I(D)$ are indexed by these sets. The function
    $h_V$ is indicated by the vertex labels shown in \Cref{fig:K33picture}, so
    if $v$ is a vertex of $\mcP$ with label $w \in W$, then $h_V(v) = w$.
    The function $h_E$ is described implicitly by the vertex
    labels, by letting $h_E(e)$ be the edge of $K_{3,3}$ with endpoints
    $h_V(s(e))$ and $h_V(t(e))$.  Finally, we let $a(e) = 1$ for all edges $e$
    and $k(v) = b(h_V(v))$ for all vertices $v$. With these labellings,
    \Cref{eq:vertex} is satisfied for all vertices, and $\mcP$ is a picture.
    Because every vertex of $K_{3,3}$ appears twice in $\mcP$, the phase of this
    picture is $2 \sum_{w \in W} b(w) = 2$, which implies that $J^2 = 1$ in
    $\Gamma_p(I(K_{3,3},b))$. Hence $|J|<p$ if $2 < p \leq +\infty$.
\end{example}

The faces of a plane digraph $\mcP$ are the connected components of $\R^2
\setminus \mcP$. By a simple undirected cycle in a digraph, we mean a sequence
$(v_1,e_1,\ldots,v_k,e_k,v_{k+1})$, where $k \geq 2$, $v_1,\ldots,v_k$ is a
sequence of distinct vertices, $v_{k+1} = v_1$, $e_1,\ldots,e_k$ is a sequence
of distinct edges, and $e_i$ is incident to $v_i$ and $v_{i+1}$ for all $1 \leq
i \leq k$. 
\begin{lemma}\label{lem:minimal}
    Let $2 \leq p \leq +\infty$, and let $Ax=b$ be an $m \times n$ linear system over $\Z$
    such that every non-zero entry of $A$ is not a zero divisor in $\Z_p$.
    If $\mcP$ is a minimal picture of $\Gamma_p(A,b)$, then:
    \begin{enumerate}[(a)]
        \item $\mcP$ has no loops, and no edges with $a(e)=0$.
        \item If $v,w$ are distinct vertices of $\mcP$ contained in the boundary of a common face, then $h(v) \neq h(w)$.  
        \item If $v$ is a vertex of $\mcP$, then $\deg(v) \geq \min(4,|R_{h(v)}(A)|)$.
        \item If $C$ is a simple undirected cycle in $\mcP$, then $C$ contains
            two edges $e_1$ and $e_2$ with $h(e_1) \neq h(e_2)$. 
    \end{enumerate}
\end{lemma}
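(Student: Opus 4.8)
The plan is to prove all four parts by one strategy: if a minimal picture $\mcP$ fails a given part, exhibit a closed picture of the same phase with strictly fewer vertices plus edges, contradicting minimality. Throughout I use that $\Gamma_p(A,b,i)\cong\Z_p^{R_i(A)}$ is abelian, so the product on the left of~\eqref{eq:vertex} may be reordered freely, and that \Cref{lem:orientation} lets me reverse any edge's orientation by negating its $a$-label; I also use parts (a) and (b) when proving (c) and (d). For (a): a loop $e$ at $v$ occurs in both products in~\eqref{eq:vertex} at $v$, contributing $x_{h(e)}^{a(e)}x_{h(e)}^{-a(e)}=\Id$, and in no other equation, while an edge with $a(e)=0$ contributes $\Id$ to the equations at its two endpoints; in either case deleting $e$ gives a picture of the same phase with one fewer edge. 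For (b): if $v\neq w$ with $h(v)=h(w)=:i$ lie on the boundary of a face $F$, choose a simple arc $\gamma$ from $v$ to $w$ with interior in $F$ and collapse a thin neighbourhood of $\gamma$; as $\gamma$ meets $\mcP$ only at its endpoints, the result is again a plane digraph with $v,w$ identified to a vertex $u$. Put $h(u):=i$ and $k(u):=k(v)+k(w)$: every edge at $u$ was incident to $v$ or $w$, so its label lies in $R_i(A)$, and since $\Gamma_p(A,b,i)$ is abelian the left side of~\eqref{eq:vertex} at $u$ is the product of those at $v$ and $w$, namely $J^{k(v)}J^{k(w)}=J^{k(u)}$. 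This picture has the same phase and one fewer vertex.

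For (c), set $i:=h(v)$ and suppose $t:=\deg(v)<\min(4,|R_i(A)|)$; then $t\leq 3$ and some $j^{*}\in R_i(A)$ labels no edge at $v$. Reading~\eqref{eq:vertex} at $v$ inside $\Gamma_p(A,b,i)\cong\Z_p^{R_i(A)}$, and using $a(e)\neq 0$ (part (a)) together with the single defining relation of $\Gamma_p(A,b,i)$ — whose only way of contributing nothing to the $j^{*}$-coordinate is to contribute nothing at all, once $k(v)$ is normalized — one finds $k(v)=0$ and that, for each label $j$ occurring at $v$, the signed multiplicities of the $j$-edges at $v$ sum to $0$. Thus each label at $v$ is used by at least two edges, so $t\leq 3$ forces $t=0$ or ($t\in\{2,3\}$ and a single label $j$). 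If $t=0$, delete $v$. If $t=2$, the two $j$-edges at $v$ have opposite signed multiplicities, so delete $v$ and splice its two edges into one $j$-edge between their other endpoints, with multiplicity chosen to preserve~\eqref{eq:vertex} there. If $t=3$: after reducing any monochromatic $2$-cycle (as in (d) below) and using (b), the three $j$-edges go to distinct $w_1,w_2,w_3$; orient them out of $v$ with multiplicities $a_1,a_2,a_3$, so $a_1+a_2+a_3=0$; now delete $v$ and the edge $e_3$ and extend $e_1,e_2$ so that they run from $w_1$ and $w_2$ to $w_3$ by following parallel copies of the former $e_3$, keeping multiplicities $a_1,a_2$. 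Then~\eqref{eq:vertex} holds at $w_1,w_2$ as before and at $w_3$ because $a_1+a_2=-a_3$; a check of the cyclic order of $e_1,e_2,e_3$ around $v$ shows the rerouting can be done without crossings. Each move strictly decreases the number of vertices plus edges.

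For (d), suppose $C=(v_1,e_1,\dots,v_k,e_k,v_1)$ is a simple undirected cycle with $h_E(e_1)=\dots=h_E(e_k)=j$, and among all monochromatic simple cycles choose one enclosing a disk $D$ of least area. If $D$ contains an edge or vertex of $\mcP$, minimality of $D$ produces a shorter monochromatic cycle — for instance, a $j$-edge with both endpoints on $C$ cuts $D$ into pieces bounded by shorter such cycles, and the $j$-edges inside $D$ otherwise form a forest attached to $C$ that, after invoking (a)--(c), can be pushed aside to shrink $C$ across a face of $\mcP$ in $D$ — so we may assume $D$ is a single face. Then all $v_l$ lie on $\partial D$, hence have distinct labels by (b), and each $v_l$ meets the two $j$-edges $e_{l-1},e_l$. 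For $k=2$ this says $v_1,v_2$ are joined by two parallel $j$-edges bounding $D$, which merge into one (as in the $t=2$ case of (c)); for $k\geq 3$ the analogous surgery around $C$ — using again that $D$ is a face and that the multiplicity differences $a_{l-1}-a_l$ around $C$ sum to $0$ — yields a picture of the same phase with fewer edges. Hence no monochromatic simple cycle exists, so every simple cycle contains two edges with distinct $h_E$-labels.

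Apart from the routine algebra in $\Gamma_p(A,b,i)$, the main obstacle is the planar surgery in the $t=3$ case of (c) and its counterpart in (d): one must re-route edges through the plane with no crossings, and in (d) first reduce to the case that the enclosed region is a single face; keeping track of edge orientations and $\Z_p$-multiplicities so that every instance of~\eqref{eq:vertex} survives these moves is the delicate part of the argument.
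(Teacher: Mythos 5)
Parts (a)--(c) of your argument are essentially the paper's: delete loops and $a(e)=0$ edges, contract two same-labelled vertices through a common face, and at a vertex of degree $\le 3$ use $k(v)=0$ together with the fact that each label at $v$ must appear with signed exponents summing to zero to perform the degree-$2$ and degree-$3$ reroutings. One remark on your degree-$3$ case: the detour through (b) and ``(d) below'' to force the three other endpoints $w_1,w_2,w_3$ to be distinct is unnecessary. The rerouted edges are allowed to be parallel edges or even loops in the new picture; all that matters is that it is a picture with the same phase and strictly fewer vertices and edges, so you can delete $v$ and reroute $e_1,e_2$ along $e_3$ exactly as in the paper, with no case analysis on coincident endpoints.

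The genuine gap is in (d). Your reduction to the case where an innermost monochromatic cycle $C$ bounds a single face is not justified: the open disk bounded by $C$ can contain an arbitrary sub-picture whose edges carry labels different from $j$, and nothing in (a)--(c) lets you ``push it aside'' or ``shrink $C$ across a face'' --- $C$ must remain a cycle of actual edges of $\mcP$, and there need not exist any monochromatic simple cycle enclosing a smaller region (your innermost $C$ may be the only one). Moreover, even granting that $C$ bounds a face, for $k\ge 3$ you never actually specify the surgery; the observation that the differences $a_{l-1}-a_l$ around $C$ sum to zero does not by itself yield a smaller picture, since the individual differences need not vanish, so deleting the cycle wholesale breaks \Cref{eq:vertex} at its vertices. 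The entire geometric reduction is avoidable: given any monochromatic simple cycle $(v_1,e_1,\ldots,v_k,e_k,v_1)$, use \Cref{lem:orientation} to orient it consistently, say $s(e_i)=v_i$ for $i<k$ and $s(e_k)=v_1$, then delete the single edge $e_k$ and replace $a(e_i)$ by $a(e_i)+a(e_k)$ for $1\le i\le k-1$. At $v_1$ the total outgoing exponent $-a(e_1)-a(e_k)$ is unchanged, at each intermediate vertex the two shifts by $a(e_k)$ cancel, and at $v_k$ the incoming exponent $a(e_{k-1})+a(e_k)$ is unchanged, so every instance of \Cref{eq:vertex} and the phase are preserved while the edge count drops by one --- regardless of what lies inside the cycle, and even if some new label $a(e_i)+a(e_k)$ happens to vanish. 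This purely algebraic relabelling is the paper's proof of (d), and it sidesteps the planar surgery that your version leaves unresolved.
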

\begin{proof}
    For part (a), if $s(e)=t(e)=v$, then $e$ contributes $x_{h(e)}^{a(e)} \cdot
    x_{h(e)}^{-a(e)} = 1$ to the product in \Cref{eq:vertex}, and
    hence $\mcP \setminus e$ will be a picture with the same phase. So if $\mcP$ is
    minimal, then $\mcP$ does not have loops. Similarly, if $\mcP$ has an edge $e$
    with $a(e)=0$, then we can remove $e$ without changing \Cref{eq:vertex} for any vertex $v$.

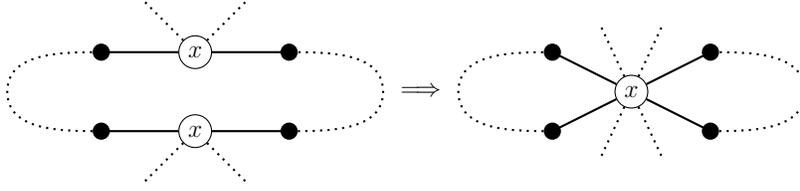
\begin{figure}[b]
    \begin{tikzpicture}[auto, thick, scale=.5, vertex/.style={circle,draw,thin,inner sep=2.5}, empty/.style={inner sep=0}, every node/.style={scale=.8}]

        \node[empty] (1) {};
        \node[vertex] (v) [right=of 1] {$x$}
            edge (1);
        \node[empty] (2) [right=of v] {}
            edge (v);
        \node[empty] (3) [below=of 1] {};
        \node[vertex] (w) [right=of 3] {$x$}
            edge (3);
        \node[empty] (4) [right=of w] {}
            edge (w);

        \draw [dotted] (v) to ($(v) + (45:2)$);
        \draw [dotted] (v) to ($(v) + (135:2)$);
        \draw [dotted] (w) to ($(w) + (-45:2)$);
        \draw [dotted] (w) to ($(w) + (-135:2)$);

        \draw [dotted] (1) to [out=180,in=90] ($.5*(1) + .5*(3) + (-2.5,0)$) to [out=-90,in=180] (3);
        \draw [dotted] (2) to [out=0,in=90] ($.5*(2) + .5*(4) + (+2.5,0)$) coordinate (r) to [out=-90,in=0] (4);

        \node (m) at ($(r) + (1,0)$) {$\implies$};
        \coordinate (l) at ($(r) + (2,0)$);

        \node[empty] (5) at ($(l) + (r) - (4)$) {};
        \node[empty] (ov) [right=of 5] {};
        \node[empty] (6) [right=of ov] {};
        \node[empty] (7) [below=of 5] {};
        \node[empty] (ow) [below=of ov] {};
        \node[empty] (8) [below=of 6] {};
        \node[vertex] (vw) at ($.5*(ov) + .5*(ow)$) {$x$}
            edge (5) edge (6) edge (7) edge (8);
        
        \draw [dotted] (vw) to ($(vw) + (65:2)$);
        \draw [dotted] (vw) to ($(vw) + (115:2)$);
        \draw [dotted] (vw) to ($(vw) + (-65:2)$);
        \draw [dotted] (vw) to ($(vw) + (-115:2)$);

        \draw [dotted] (5) to [out=180,in=90] (l) to [out=-90,in=180] (7);
        \draw [dotted] (6) to [out=0,in=90] ($.5*(6) + .5*(8) + (+2.5,0)$) to [out=-90,in=0] (8);

        \foreach \x in {1,...,8}
            \draw [fill] (\x) circle [radius=.2];
    
    \end{tikzpicture}
    \caption{If two vertices in a common face have the same label $x$, then we can contract the vertices
            to a single vertex to get a picture with the same phase but fewer vertices.}\label{fig:contractvertex}
\end{figure}

    For part (b), suppose $v$ and $w$ are contained in the boundary of a common
    face $F$, and $h(v) = h(w)$.  Contract $v$ and $w$ along a path through $F$
    to get a new planar graph $\mcP'$ where $v$ and $w$ are replaced with a new
    vertex $u$, all edges that were incident to $v$ and $w$ are now incident to
    $u$ (any edges between $v$ and $w$ turn into loops), and all other vertices
    and edges are unchanged, as shown in \Cref{fig:contractvertex}.
    If we label $u$ by $h(v)=h(w)$, set the phase of $u$ to be $k(v)+k(w)$, and
    leave everything else from $\mcP$ unchanged, then $\mcP'$ will be a picture
    of $\Gamma_p(A,b)$ with the same phase as $\mcP$, and one less vertex. 

\begin{figure}
    \begin{tikzpicture}[auto, thick, scale=.5,vertex/.style={circle,draw,thin,inner sep=2.5}, empty/.style={inner sep=0}, every node/.style={scale=.8}]

    \begin{scope}
        \node[empty] (v) {} node[below,shift={(-.1,-.15)}] {$v$};
        \draw[Stealth-] (v) to node[left] {$e_1$} (135:4) coordinate (1);
        \draw[Stealth-] (v) to node[below] {$e_2$} (0:4) coordinate (2);
        \draw[-Stealth,dashed] (1) to [bend left] node[below] {$e'$} node[above,shift={(.7,.2)}] {$a(e') = a(e_1)$} (2);

        \foreach \x in {v,1,2}
            \draw [fill] (\x) circle [radius=.2];
    \end{scope}

    \begin{scope}[shift={(15,0)}]
        \node[empty] (v) {} node[below,shift={(-.1,-.15)}] {$v$};
        \draw[Stealth-] (v) to node[left,pos=.4] {$e_3$} (90:3.5) coordinate (3);
        \draw[Stealth-] (v) to node[below] {$e_2$} (-20:3.5) coordinate (2);
        \draw[Stealth-] (v) to node[below] {$e_1$} (200:3.5) coordinate (1);

        \draw[-Stealth,dashed] (1) to node[right,pos=.4] {$f_1$} node[left,shift={(-.2,.3)}] {$a(f_1) = a(e_1)$} (3);
        \draw[-Stealth,dashed] (2) to node[left,pos=.4] {$f_2$} node[right,shift={(.2,.3)}] {$a(f_2) = a(e_2)$} (3);

        \foreach \x in {v,1,2,3}
            \draw [fill] (\x) circle [radius=.2];
    \end{scope}
        
    \end{tikzpicture}
    \caption{If $v$ has two or three incident edges with the same label, then we can delete $v$ and replace the incident
        edges with the dotted edges to get a smaller picture with the same phase.}\label{fig:smalldegremoval}
\end{figure}
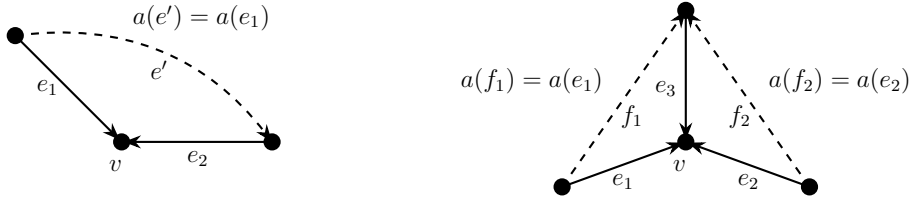

    For part (c), suppose $\deg(v) < \min(4,|R_{h(v)}(A)|)$, and let $R' =
    \{h(e) : e \text{ incident to } v\}$. By part (a), we can assume that
    none of the edges incident to $v$ are loops, and that $a(e) \neq 0$ for all
    edges $e$. Since $|R'| \leq \deg(v)$, $R'$ is a strict subset of
    $R_{h(v)}(A)$. Let $R_{h(v)}(A) = \{j_1,\ldots,j_\ell\}$ where $j_1 <
    \ldots < j_{\ell}$, let $\beta_i$ be the $i$th coordinate vector of
    $\Z_p^{\ell+1}$, and let 
        $\alpha := (A_{h(v)j_1},\ldots,A_{h(v)j_{\ell}}, - b_{h(v)}) \in \Z_{p}^{\ell+1}$.
    There is an isomorphism $\Gamma_p(A,b,h(v)) \to \Z_{p}^{\ell+1} / \ang{\alpha}$ sending $x_{j_i} \mapsto \beta_i$
    for all $1 \leq i \leq \ell$ and $J \mapsto \beta_{\ell+1}$.  With this
    presentation for $\Gamma_p(A,b,h(v))$, \Cref{eq:vertex} states that the
    vector
    \begin{equation*}
        \gamma := \sum_{i = 1}^{\ell} \left(\sum_{\substack{e \in R' : h(e) = j_i\\ \text{ and } s(e)=v }} a(e)
            - \sum_{\substack{e \in R': h(e) = j_i \\ \text{ and }t(e)=v}} a(e)\right) \beta_i - k(v) \beta_{\ell+1}
    \end{equation*}
    belongs to $\ang{\alpha}$, so $\gamma = r \alpha$ for some $0 \leq r < p$.
    Out of the first $\ell$ coordinates of $\gamma$, at most $|R'| < \ell$ are non-zero.
    But since $A_{h(v)j_i}$ is a non-zero divisor for all $1 \leq i \leq \ell$, 
    the first $\ell$ coordinates of $r \alpha$ are non-zero for all $0 < r < p$.
    We conclude that $\gamma = 0 \cdot \alpha = 0$. 
    In particular, $k(v) =0$, so if $\deg(v) = 0$, then $\mcP \setminus v$ is a
    picture with the same phase, contradicting the minimality of $\mcP$.
    We also see that for every $j_i \in R'$, there must be at least two edges
    $e \in R'$ such that $h(e) = j_i$, since otherwise the $i$th coordinate of
    $\gamma$ is $\pm a(e) \neq 0$. This implies that the degree of $v$ can't be one,
    and if $\deg(v) = 2$ or $3$, then $R'$ consists of a single element $j_1 = j$ 
    (this conclusion would not hold if $\deg(v)=4$). 
    Since none of the edges
    incident to $v$ are loops, we can apply \Cref{lem:orientation} to 
    assume that all edges $e$ incident to $v$ are incoming, or in other
    words have $t(e) = v$. If $v$ is incident to two edges $e_1$ and $e_2$,
    then $a(e_1) = - a(e_2)$. Hence we can delete $v$ and replace $e_1$ and
    $e_2$ with a single edge $e'$ from $s(e_1)$ to $s(e_2)$ with labels
    $h(e') = h(e_1)$ and $a(e') = a(e_1)$, as shown on the left in \Cref{fig:smalldegremoval}.
    This yields a picture with the same phase as $\mcP$, and one less vertex
    and edge, contradicting minimality again. If $v$ is incident to
    three edges $e_1$, $e_2$, and $e_3$, then $a(e_1) + a(e_2) + a(e_3)= 0$, and
    we can again delete $v$ and replace $e_1$, $e_2$, and $e_3$ with two edges
    $f_1$ and $f_2$ such that $s(f_1) = s(e_1)$, $s(f_2) = s(e_2)$, $t(f_1) =
    t(f_2) = s(e_3)$, $h(f_1) = h(f_2) = h(e_1)$, $a(f_1) = a(e_1)$, and
    $a(f_2) = a(e_2)$, as shown on the right in \Cref{fig:smalldegremoval}.
    Once again, we get a picture with the same phase as $\mcP$ but one less
    vertex and edge.

    \begin{figure}
        \begin{tikzpicture}[auto, thick, scale=.5,vertex/.style={circle,draw,thin,inner sep=2.5}, empty/.style={inner sep=0}, every node/.style={scale=.8}]

        \begin{scope}
            \node[empty] (1) {};
            \node[empty] (2) [right=of 1,shift={(.5,0)}] {};
            \node[empty] (3) [right=of 2,shift={(.5,0)}] {};
            \node[empty] (4) [right=of 3,shift={(.5,0)}] {};
            \node at ($(1)+(-.1,-.6)$) {$v_1$};
            \node at ($(2)+(-.1,-.6)$) {$v_2$};
            \node at ($(3)+(-.1,-.6)$) {$v_{k-1}$};
            \node at ($(4)+(-.1,-.6)$) {$v_k$};
            \draw[-Stealth] (1) to node[below] {$e_1$} (2);
            \draw[dashed] (2) to (3);
            \draw[-Stealth] (3) to node[below] {$e_{k-1}$} (4);
            \draw[-Stealth] (1) to[bend left] node[above] {$e_k$} (4);

            \node [right=of 4] {$\implies$};

            \foreach \x in {1,2,3,4}
                \draw [fill] (\x) circle [radius=.2];
        \end{scope}

        \begin{scope}[shift={(15,0)}]
            \node[empty] (1) {};
            \node[empty] (2) [right=of 1,shift={(1,0)}] {};
            \node[empty] (3) [right=of 2,shift={(1,0)}] {};
            \node[empty] (4) [right=of 3,shift={(1,0)}] {};
            \node at ($(1)+(-.1,-.6)$) {$v_1$};
            \node at ($(2)+(-.1,-.6)$) {$v_2$};
            \node at ($(3)+(-.1,-.6)$) {$v_{k-1}$};
            \node at ($(4)+(-.1,-.6)$) {$v_k$};
            \draw[-Stealth] (1) to node[below] {$e_1$} node[above] {\scriptsize $a(e_1)+a(e_k)$} (2);
            \draw[dashed] (2) to (3);
            \draw[-Stealth] (3) to node[below] {$e_{k-1}$} node[above] {\scriptsize $a(e_{k-1}) + a(e_k)$} (4);

            \foreach \x in {1,2,3,4}
                \draw [fill] (\x) circle [radius=.2];
        \end{scope}

        \end{tikzpicture}

        \caption{If we have a cycle where all edges have the same label, we can delete
            one of the edges and adjust $a(e_i)$ for the remaining edges $e_i$ to get a picture
            with the same phase.}\label{fig:cycledel}
    \end{figure}
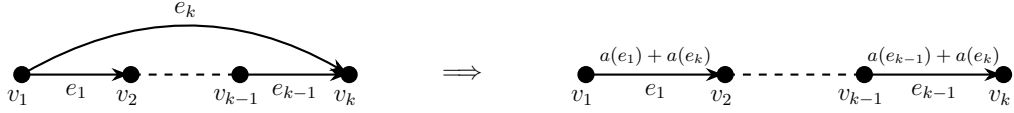
    Finally, for part (d), suppose $C = (v_1,e_1,\ldots,v_k,e_k,v_1)$ is a 
    simple cycle in $\mcP$ with $h(e_i)=h(e_j)$ for all $i,j$. By Lemma
    \ref{lem:orientation}, we can assume that $s(e_i) = v_i$ for all $1 \leq i
    \leq k-1$, and $s(e_k)=v_1$, as shown on the left in \Cref{fig:cycledel}.
    Deleting $e_k$ and changing the labels of $e_i$ from $a(e_i)$ to $a(e_i) +
    a(e_k)$ for all $1 \leq i \leq k-1$ gives a new picture with the same phase
    as $\mcP$ and one less edge, as shown on the right in \Cref{fig:cycledel}.
\end{proof}

We need some simple lemmas about hypergraphs and planar graphs:
\begin{lemma}\label{lem:girth}
    The girth of a hypergraph is the smallest $k$ such that there is a sequence
    $(v_1,e_1,\ldots,v_k,e_k,v_{k+1})$ where $v_1,\ldots,v_k$ is a sequence
    of distinct vertices, $v_{k+1} = v_1$, $e_1,\ldots,e_k$ is a sequence of
    edges containing at least two different edges, and $e_i$ is incident to
    $v_i$ and $v_{i+1}$ for all $1 \leq i \leq k$. 
\end{lemma}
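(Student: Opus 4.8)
The plan is to recognize the statement as an equality of two extremal quantities and prove each inequality. Call a sequence $(v_1,e_1,\ldots,v_k,e_k,v_{k+1})$ of the form described in the lemma a \emph{weak cycle}, and let $w$ denote the minimum length of a weak cycle, with the convention (as with girth) that $w$ is undefined if none exists. The goal is to show $w$ equals the girth $g$. One inequality is immediate: the axioms for a weak cycle are obtained from those for a Berge cycle by relaxing ``$e_1,\ldots,e_k$ are distinct'' to ``$e_1,\ldots,e_k$ contain at least two distinct edges'', and since a Berge cycle has $k\ge 2$ distinct edges it certainly contains two distinct ones. Hence every Berge cycle is a weak cycle, so $w\le g$ and, in particular, the existence of a Berge cycle forces the existence of a weak cycle.

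For the reverse inequality I would take a weak cycle $W=(v_1,e_1,\ldots,v_k,e_k,v_1)$ of minimum length $k=w$ and prove that $W$ is in fact a Berge cycle; this gives $g\le k=w$ and hence $g=w$, and it also shows that the existence of a weak cycle forces the existence of a Berge cycle, so the ``assuming one exists'' clauses on the two sides agree. Suppose $W$ is not a Berge cycle, so $e_i=e_j=:e$ for some $i<j$. Because $W$ has at least two distinct edges, there is an index $l\notin\{i,j\}$ with $e_l\neq e$. Deleting the two edges $e_i$ and $e_j$ splits the cyclic sequence $W$ into two vertex-disjoint paths, $P_1$ running from $v_{i+1}$ to $v_j$ and $P_2$ running from $v_{j+1}$ to $v_i$, and $e_l$ lies on exactly one of them. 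The key point is that $e=e_i$ is incident to $v_{i+1}$ while $e=e_j$ is incident to $v_j$, so adjoining a copy of $e$ closes $P_1$ into a closed sequence $C_1$ through the distinct vertices $v_{i+1},\ldots,v_j$; symmetrically $e$ closes $P_2$ into $C_2$. The lengths of $C_1$ and $C_2$ sum to $k$ and each is at least $1$. Now take whichever of $C_1,C_2$ is built from the path containing $e_l$: it uses the two distinct edges $e_l$ and $e$, so it has at least two edges, hence at least two vertices, hence it is a genuine weak cycle; and its length is at most $k-1$ since the other one has length at least $1$. This contradicts the minimality of $k$.

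The step to watch is the degenerate case in which one of $P_1,P_2$ has no edges at all --- exactly when $j=i+1$, or cyclically when $(i,j)=(1,k)$ --- because naively closing such an empty path with $e$ yields a one-vertex ``loop'' rather than a weak cycle. This causes no trouble: the path that actually contains $e_l$ is necessarily nonempty, so the cycle we pass to automatically has at least two edges and at least two vertices, and no separate case analysis is required. (Equivalently, one can phrase the same argument as a strong induction on $k$ rather than as a minimal-counterexample argument.) Everything else is routine index bookkeeping, and I expect the only genuine care needed is in making the ``delete $e_i,e_j$, reattach one copy of $e$'' surgery precise.
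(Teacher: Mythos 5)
Your proof is correct and takes essentially the same route as the paper: both directions are handled the same way, with the easy inclusion (every Berge cycle is a weak cycle) followed by a shortening surgery that cuts a weak cycle at two occurrences of a repeated edge and reuses that edge to close one of the two arcs. The only difference is bookkeeping: the paper first removes consecutive duplicate edges and then takes a \emph{closest} repeated pair, so the arc between the occurrences is immediately a Berge cycle of smaller length, whereas you take a minimum-length weak cycle and choose the arc containing an edge different from the repeated one, reaching a contradiction with minimality; both choices are sound and interchangeable.
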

\begin{proof}
    A Berge sequence satisfies the conditions of the lemma, so it suffices to
    show that if $(v_1,e_1,\ldots,e_k,v_{k+1})$ is a sequence satisfying the
    conditions of the lemma, then there is a Berge cycle of length $\leq k$.
    To prove this, first note that if $e_i = e_{i+1}$ for some $1 \leq i \leq
    k-1$, then $v_{i+2}$ is incident to $e_i$ and
    $e_1,\ldots,e_i,e_{i+2},\ldots,e_k$ still has at least two different edges,
    so $(v_1,\ldots,v_i,e_i,v_{i+2},e_{i+2},\ldots,e_k,v_{k+1})$ satisfies
    the conditions of the lemma. Thus we can assume that $e_i \neq e_{i+1}$ for
    all $1 \leq i \leq k-1$. If $e_i \neq e_j$ for all $1 \leq i \neq j \leq
    k$, then our sequence is a Berge cycle.  Otherwise, pick $1 \leq i_0 < j_0
    \leq k$ with $e_{i_0} = e_{j_0}$ minimizing $j_0-i_0$, so that
    $e_{i_0},\ldots,e_{j_0-1}$ is distinct.  Since $e_{i} \neq e_{i+1}$, we
    must have $j_0 > i_0+1$, so $e_i,\ldots,e_{j-1}$ contains at least two
    edges. Since $v_{j_0}$ is incident to $e_{j_0} = e_{i_0}$, we conclude that
    $(v_{j_0},e_{i_0},v_{i_0+1},\ldots,v_{j_0-1},e_{j_0-1},v_{j_0})$ is a Berge
    cycle of length $j_0-i_0 < k$.
\end{proof}

\begin{lemma}\label{lem:simplecycleinface}
    If $\mcP$ is a loopless plane graph with at least two faces, then for every
    face $F$, there is a simple cycle in $\mcP$ contained in the boundary of $F$.
\end{lemma}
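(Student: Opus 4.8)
The plan is to argue by contradiction. Suppose the boundary of $F$, i.e.\ the topological frontier $\partial F := \overline{F}\setminus F$ regarded as a subgraph of $\mcP$, contains no simple undirected cycle. Since $\mcP$ is loopless, a subgraph of $\mcP$ with no loop and no simple cycle of length $\geq 2$ (in particular no pair of parallel edges) is a forest, so $\partial F$ is a forest. The key input is then the standard topological fact that a forest embedded in $\R^2$ does not separate the plane, i.e.\ $\R^2\setminus\partial F$ is connected. I would justify this from Euler's formula $v-e+f=1+c$ for plane graphs — a forest has $e=v-c$, hence $f=1$ — or, to avoid Euler, by induction on the number of edges using the fact that a simple arc does not separate $\R^2$.

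Next I would show that $F$ is a nonempty clopen subset of $\R^2\setminus\partial F$. It is open in $\R^2$, being a connected component of the open set $\R^2\setminus\mcP$, hence open in the subspace $\R^2\setminus\partial F$. It is also closed there: since $F$ is a connected component of the open set $\R^2\setminus\mcP$ it is closed in $\R^2\setminus\mcP$, so every limit point of $F$ lying off $F$ lies on $\mcP$, and then necessarily on $\overline{F}\setminus F=\partial F$; hence the closure of $F$ inside $\R^2\setminus\partial F$ is $F$ itself. Connectedness of $\R^2\setminus\partial F$ then forces $F=\R^2\setminus\partial F$. But by hypothesis $\mcP$ has a second face $F'\neq F$, and $F'\subseteq\R^2\setminus\mcP\subseteq\R^2\setminus\partial F=F$, contradicting that distinct faces of $\mcP$ are disjoint. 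This contradiction shows that $\partial F$ contains a simple cycle, and such a cycle lies in the boundary of $F$ by the definition of $\partial F$.

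The only step carrying genuine topological content is the claim that a plane forest has connected complement, and this is precisely where looplessness is used: a loop (or free closed loop) is a Jordan curve which separates the plane yet is not a simple cycle in the sense of the paper, so without the looplessness hypothesis $\partial F$ could fail to be a forest while still containing no simple cycle. One routine point to verify en route is that $\overline{F}\setminus F$ is genuinely a subcomplex of $\mcP$ — a union of closed edges together with possibly some isolated vertices — which is immediate from $F$ being open; this is what lets us speak of ``the subgraph $\partial F$'' and of a cycle being ``contained in the boundary of $F$''.
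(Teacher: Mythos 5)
Your argument is correct, but it proceeds quite differently from the paper. The paper's proof is constructive and combinatorial: it first deletes every edge lying on the boundary of only one face (which changes neither the number of faces nor enlarges any face boundary), so that in the reduced graph every edge borders exactly two faces; it then traverses the boundary walk of the face, notes that the edges of this walk are distinct, and extracts a simple cycle by choosing a repeated vertex pair $v_i=v_j$ with $j-i$ minimal, looplessness forcing $j>i+1$. Your proof is instead a topological argument by contradiction: if the frontier $\partial F=\overline{F}\setminus F$ contained no simple cycle (in the paper's sense, which includes parallel pairs but excludes loops), it would be a plane forest, hence by Euler's formula $v-e+f=1+c$ with $e=v-c$ its complement would be connected, forcing the clopen set $F$ to be all of $\R^2\setminus\partial F$ and swallowing the second face. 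Both routes are sound; yours is shorter and makes the role of looplessness conceptually transparent (a loop is a separating Jordan curve that is not a simple cycle), at the cost of leaning on topological inputs --- that $\partial F$ is a subcomplex of $\mcP$ and that plane forests do not separate the plane --- whose verification is of the same ``routine for tame embeddings'' flavour as the boundary-walk facts the paper uses without proof. One small quibble: the subcomplex property of $\partial F$ is not literally ``immediate from $F$ being open''; it needs the local structure of the embedding near an interior point of an edge (a small disk meets $\mcP$ only in that edge, and the face adjacency propagates along the edge by a chain argument), after which the endpoints lie in $\partial F$ by taking closures. The paper's proof, by contrast, hands you the cycle explicitly, which is mildly more in the spirit of the picture manipulations used elsewhere in the section, but nothing downstream depends on that constructivity.
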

\begin{proof}
    Every edge of $\mcP$ is contained in the boundary of either one or two
    faces.  Let $\mcP'$ be $\mcP$ with all the edges that are contained in the boundary
    of only one face deleted. Then $\mcP'$ has the same number of faces as
    $\mcP$, and every face $F$ of $\mcP$ is contained in a unique face $F'$ of $\mcP'$.
    Furthermore, the boundary of $F'$ is contained in the boundary of $F$. So
    we just need to show that every face $F'$ of $\mcP'$ contains a simple cycle.

    Pick an edge $e$ in the boundary of $F'$, and let
    $(v_1,e_1,v_2,e_2,\ldots,v_k,e_k)$ be the sequence of vertices and edges we
    encounter starting from $v_1 = s(e)$, and travelling along the boundary of
    $F'$ in the direction $e=e_1$ until we return back to the start of $e_1$.
    Because every edge of $\mcP'$ is contained in exactly two faces, the edges
    in the sequence $e_1,\ldots,e_k$ are distinct. Pick $1 \leq i < j \leq k+1$
    with $v_i = v_j$ minimizing $j-i$. Since $\mcP'$ is loopless, $j > i+1$,
    and since the vertices $v_i,\ldots,v_{j-1}$ are distinct,
    $(v_i,e_i,v_{i+1},e_{i+1},\ldots,v_{j-1},e_{j-1},v_i)$ is a simple cycle.
\end{proof}

If $F$ is a face of a plane graph $\mcP$, then the size of $F$ is $n_1 + 2
n_2$, where $n_1$ is the number of edges of $\mcP$ contained in the boundary of
$F$ and the boundary of some other face, and $n_2$ is the number of edges of
$\mcP$ contained in the boundary of $F$ and not contained in the boundary of
any other face. The following well-known lemma is the key idea of
combinatorial small cancellation theory. Since the proof is short, we include
it for the interested reader. 
\begin{lemma}\label{lem:smallcancellationgraph}
    Let $(a,b) \in \{(6,3),(4,4),(3,6)\}$. There are no plane graphs where
    every vertex has degree $\geq a$ and every face has size $\geq b$.
\end{lemma}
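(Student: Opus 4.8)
The plan is to run the classical Euler-characteristic counting argument that underlies combinatorial small cancellation. The first thing to record is the arithmetic coincidence that makes the three pairs special: for each $(a,b) \in \{(6,3),(4,4),(3,6)\}$ one has $\tfrac{2}{a} + \tfrac{2}{b} = 1$ (namely $\tfrac13+\tfrac23$, $\tfrac12+\tfrac12$, $\tfrac23+\tfrac13$). Everything reduces to exploiting this identity, so I would state it at the outset and then argue by contradiction, assuming a plane graph $\mcP = (V,E)$ exists with $\deg(v) \geq a$ for every vertex $v$ and $\operatorname{size}(F) \geq b$ for every face $F$.

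Next I would set up the two double-counting identities. Summing vertex degrees counts each edge twice (a loop at $v$ contributing $2$ to $\deg v$), so $\sum_{v \in V} \deg(v) = 2|E|$, and the degree hypothesis gives $a|V| \leq 2|E|$. For faces, using the notion of size defined just above the lemma, every edge of a plane graph has exactly two sides, hence lies on the boundary of two faces counted with multiplicity: either two distinct faces, contributing $1$ to the size of each, or a single face on both sides (a bridge), contributing $2$ to that face's $n_2$. In all cases the total contribution of an edge to $\sum_F \operatorname{size}(F)$ is $2$, so $\sum_F \operatorname{size}(F) = 2|E|$, and the size hypothesis gives $b|F| \leq 2|E|$, where $|F| \geq 1$ is the number of connected components of $\R^2 \setminus \mcP$.

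Then I would invoke Euler's formula in the form $|V| - |E| + |F| = 1 + c$, where $c$ is the number of connected components of $\mcP$ (provable by induction on $c$ from the connected case). If $\mcP$ has no vertices and no edges then $|F| = 1$ and that face has size $0 < b$, contradicting the hypothesis, so I may assume $|E| \geq 1$, whence $c \geq 1$ and $|V| - |E| + |F| \geq 2$. Combining this with the two inequalities above yields
\[
2 \;\leq\; |V| - |E| + |F| \;\leq\; \tfrac{2}{a}|E| - |E| + \tfrac{2}{b}|E| \;=\; |E|\Bigl(\tfrac{2}{a} + \tfrac{2}{b} - 1\Bigr) \;=\; 0,
\]
which is absurd. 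Hence no such plane graph exists.

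I do not expect any real obstacle here; the only points requiring care are bookkeeping ones: checking that the face-size convention is precisely what makes $\sum_F \operatorname{size}(F) = 2|E|$ hold in the presence of bridges, allowing loops and parallel edges throughout the degree and size counts, and handling disconnected or degenerate plane graphs via the $1+c$ version of Euler's formula rather than assuming connectivity. If I wanted to avoid the disconnected case entirely I could instead pass to a single connected component, since the degree and size bounds are inherited by each component, but the argument above already covers the general case directly.
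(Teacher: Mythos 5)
Your proof is correct and follows essentially the same route as the paper: double-count to get $a|V| \leq 2|E|$ and $b|F| \leq 2|E|$, use $\tfrac{2}{a}+\tfrac{2}{b}=1$, and contradict Euler's formula. The only difference is that you spell out the degenerate and disconnected cases via the $|V|-|E|+|F|=1+c$ form of Euler's formula, which the paper glosses over; this is a harmless (and slightly more careful) refinement, not a different argument.
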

\begin{proof}
    Suppose $\mcP$ is a plane graph with $V$ vertices, $E$ edges, and $F$
    faces. If every vertex has degree $\geq a$, then the number of edges $E
    \geq a V / 2$. Similarly, if every face has size $\geq b$, then $E \geq b F / 2$.
    If $(a,b) \in \{(6,3),(4,4),(3,6)\}$, then $2/a + 2/b = 1$, and $E \geq V +
    F$. But this contradicts Euler's formula that $V + F - E = 2$.
\end{proof}

\begin{proof}[Proof of \Cref{T:main}]
    Suppose $\mcP$ is a minimal picture with phase $k$. By part (a) of Lemma \ref{lem:minimal},
    $\mcP$ is loopless, and by part (c), every vertex $v$ of $\mcP$ has degree $\geq d$. If 
    $\mcP$ has at least two faces, then Lemma \ref{lem:simplecycleinface} implies that every
    face $F$ of $\mcP$ has a simple cycle $(v_1,e_1,\ldots,v_m,e_m,v_{m+1})$ in its boundary.
    By part (b) of \Cref{lem:minimal}, the vertices $h(v_1),\ldots,h(v_m)$ are distinct,
    and by part (d), the sequence $h(e_1),\ldots,h(e_m)$ contains at least two different edges. 
    By \Cref{lem:girth}, we conclude that $m \geq g$, and hence every face of $\mcP$ has
    size $\geq g$, contradicting \Cref{lem:smallcancellationgraph}. 

    The other possibility is that $\mcP$ has just one face. In this case,
    $\mcP$ must either have no vertices and no edges, or be a forest. But since
    every vertex of $\mcP$ has degree $\geq 3$, $\mcP$ must have no vertices and
    edges, and thus $k=0$. By the van
    Kampen lemma, if $J^k=1$ in $\Gamma_p(A,b)$ then $k=0$ in $\Z_p$.
\end{proof}
Because we can't exclude the possibility of vertices of degree $4$ in part (b)
of \Cref{lem:minimal}, this proof doesn't work if $(d,g) = (6,3)$. However, we
do not know of a counterexample to the conclusion of \Cref{T:main} when
$(d,g)=(6,3)$. 

As mentioned in the introduction, a number of natural questions are not
answered by our work, particularly:
\begin{problem}\label{prob:finite}
    Is there $2 < p \leq +\infty$ and a linear system $Ax=b$ over $\Z$ with
    no classical solution, such that $\Gamma_p(A,b)$ has a finite-dimensional
    representation $\phi : \Gamma_p(A,b) \to U(\C^d)$ with $|\phi(J)| = p$? 
\end{problem}

\begin{problem}
    Is there $2 < p \leq +\infty$ and a linear system $Ax=b$ over $\Z$ such
    that $|J|=p$ in $\Gamma_p(A,b)$, but there is no finite-dimensional
    representation $\phi : \Gamma_p(A,b) \to U(\C^d)$ with $|\phi(J)| = p$? 
\end{problem}
For both problems, we are especially interested in examples with $p$ prime. A
solution to \Cref{prob:finite} has been announced by van Dobben de Bruyn
\cite{vDdB}.


\section{Incidence systems of graphs}\label{sec:incidence}

In this section, we look a bit closer at the incidence systems in
\Cref{ex:incidence}. As in \Cref{ex:K33}, we allow graphs and digraphs $G$
with arbitrary (finite) vertex and edge sets, which we denote by $V(G)$ and
$E(G)$ respectively. With this convention, the incidence matrix $I(G)$ of a 
digraph $G$ is an element of $\Z^{V \times E}$, and we get an incidence system
$I(G)x=b$ for any vector $b \in \Z^V$. Furthermore, the solution group
$\Gamma_p(I(G),b)$ is generated by $J$ and $x_e, e \in E$. The relations in
part (2) of \Cref{def:solngroup} state that 
\begin{equation*}
    \prod_{t(e)=v} x_e \cdot \prod_{s(e)=v} x_{e}^{-1} = J^{b_v}
\end{equation*}
for all vertices $v \in V$, and the relations in part (3) state that $x_e$ and
$x_{e'}$ commute if the edges $e$ and $e'$ are incident to a common vertex. 
A vector $b : V \to \Z$ is called a \textbf{$\Z$-weighting} of $G$, and a pair $(G,b)$
is called a \textbf{$\Z$-weighted graph}.

Given two linear systems $Ax=b$ and $A'x'=b'$, we say that a homomorphism
$\Gamma_p(A,b) \to \Gamma_p(A',b')$ is a \textbf{homomorphism over $\Z_p$} if
it sends $J \mapsto J$. Similarly, an \textbf{isomorphism over $\Z_p$} is a
homomorphism over $\Z_p$ which is a group isomorphism. Our starting point
for this section is the following lemma:
\begin{lemma}\label{lem:changeorientation}
    Suppose $(G,b)$ is a $\Z$-weighted graph, and $D$ and $D'$ are digraphs
    with underlying undirected graph $G$. For every $2 \leq p \leq +\infty$,
    there is an isomorphism $\Gamma_p(I(D),b) \to \Gamma_p(I(D'),b)$ over $\Z_p$.
\end{lemma}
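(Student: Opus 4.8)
The plan is to build the isomorphism directly on generators. Since $D$ and $D'$ have the same underlying graph $G$, they differ only by the orientations of the edges in some subset $S \subseteq E(G)$: for $e \notin S$ we have $s_{D'}(e)=s_D(e)$ and $t_{D'}(e)=t_D(e)$, while for $e \in S$ we have $s_{D'}(e)=t_D(e)$ and $t_{D'}(e)=s_D(e)$. Reversing the orientation of an edge $e$ negates the column of the incidence matrix indexed by $e$, which suggests that on the level of solution groups it should correspond to the substitution $x_e \mapsto x_e^{-1}$. Accordingly, I would define $\phi \colon \Gamma_p(I(D),b) \to \Gamma_p(I(D'),b)$ on generators by $J \mapsto J$, $x_e \mapsto x_e$ for $e \notin S$, and $x_e \mapsto x_e^{-1}$ for $e \in S$.

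The bulk of the proof is then checking that each defining relation of $\Gamma_p(I(D),b)$ from \Cref{def:solngroup} holds in $\Gamma_p(I(D'),b)$ after substituting $\phi(x_e)$ for $x_e$. The order relations $J^p=1$, $x_e^p=1$ and the relations $[x_e,J]=1$ are immediate, since $x_e^{-1}$ has the same order as $x_e$ in $\Gamma_p(I(D'),b)$ and commutes with $J$ there. For the commuting relations, the pairs of edges incident to a common vertex are the same in $D$ and $D'$ (the underlying graph is $G$ in both cases), and $[x_e,x_{e'}]=1$ in $\Gamma_p(I(D'),b)$ implies $[x_e^{\pm 1},x_{e'}^{\pm 1}]=1$. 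The one computation requiring a moment's care is the vertex relation at a vertex $v$: in $\Gamma_p(I(D),b)$ it reads $\prod_{t_D(e)=v} x_e \cdot \prod_{s_D(e)=v} x_e^{-1} = J^{b_v}$, and all the $x_e$ occurring in it commute by relation~(3), so the factors may be reordered freely. Applying $\phi$, an edge $e \notin S$ incident to $v$ contributes the same factor $x_e^{\pm 1}$ as in the vertex relation of $\Gamma_p(I(D'),b)$; an edge $e \in S$ incident to $v$ has its source and target swapped in $D'$ while $\phi(x_e)=x_e^{\mp 1}$, so it again contributes exactly the factor demanded by the vertex relation of $\Gamma_p(I(D'),b)$ at $v$. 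Hence, up to reordering commuting factors, the image under $\phi$ of the vertex relation of $\Gamma_p(I(D),b)$ at $v$ is precisely the vertex relation of $\Gamma_p(I(D'),b)$ at $v$, so $\phi$ is a well-defined homomorphism over $\Z_p$.

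Finally, the same construction with the same set $S$ gives a homomorphism $\psi \colon \Gamma_p(I(D'),b) \to \Gamma_p(I(D),b)$ over $\Z_p$, and both $\psi\circ\phi$ and $\phi\circ\psi$ fix every generator, so $\phi$ is an isomorphism over $\Z_p$. The case $p=\infty$ is identical once the order relations in~(1) are dropped. I do not expect any serious obstacle here; the only subtlety is that the vertex relation must be manipulated using relation~(3), commuting factors past one another, before comparing the two orientations.
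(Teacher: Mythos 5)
Your proposal is correct and is essentially the paper's own proof: the paper defines the same map $J \mapsto J$, $x_e \mapsto x_e^{\pm 1}$ according to whether the orientation of $e$ agrees in $D$ and $D'$, and simply asserts that inspection of the relations in \Cref{def:solngroup} shows it is an isomorphism. You merely spell out the relation check (including the correct use of the vertex-wise commuting relations) and the inverse map, which the paper leaves implicit.
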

\begin{proof}
    Both solution groups are generated by $J$ and $x_e$, $e \in E(G)$. 
    Looking at the relations in \Cref{def:solngroup}, we
    see that there is an isomorphism $\Gamma_p(I(D),b) \to \Gamma_p(I(D'),b)$
    which sends $J \mapsto J$ and
    \begin{equation*}
        x_e \mapsto \begin{cases} x_e & e \text{ has the same orientation in } D \text{ and } D' \\
                        x_e^{-1} & e \text{ has the opposite orientation in } D \text{ and } D' \\
                \end{cases}\ .
    \end{equation*}
\end{proof}
As a result, if $G$ is an undirected graph, we'll use $I(G)$ to refer to the
incidence matrix of $G$ with some arbitrarily chosen edge orientations, and
think of $\Gamma_p(I(G),b)$ as being associated with $G$ as an undirected
graph.  If $G$ is connected (as an undirected graph), then the
solution group does not depend on $b$ either, just on $|b| := \sum_v b_v$.
\begin{lemma}\label{lem:changeb}
    Let $G$ be a connected graph, and $2 \leq p \leq +\infty$. If $b, b' : V(G) \to \Z$
    are two $\Z$-weightings such that $|b| = |b'|$ in $\Z_p$, then $\Gamma_p(I(G),b)$
    is isomorphic to $\Gamma_p(I(G),b')$ over $\Z_p$.
\end{lemma}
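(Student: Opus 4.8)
The plan is to reduce the general case to the case where $b$ and $b'$ differ at only two adjacent vertices, and then handle that elementary case directly. First I would observe that since $G$ is connected and $|b| = |b'|$ in $\Z_p$, the difference $b - b'$ is a $\Z$-colouring with $|b - b'| = 0$ in $\Z_p$; I want to write $b'$ as the result of applying finitely many "local moves" to $b$, where a local move picks an edge $e = \{u,v\}$ of $G$ and replaces $b$ by $b + \delta_u - \delta_v$ (here $\delta_w$ is the indicator colouring of the vertex $w$). Since $G$ is connected, by pushing units along paths one can realize any colouring $c$ with $|c| = 0$ in $\Z$ as a sum of such $\pm(\delta_u - \delta_v)$ over edges; for the case $|c| = 0$ only in $\Z_p$, one first moves from $b$ to a colouring $b''$ with $b'' = b'$ in $\Z$ outside a single vertex $v_0$ and $b''_{v_0} \equiv b'_{v_0} \pmod p$, which suffices because $\Gamma_p$ only sees colourings mod $p$ (replacing $b_v$ by $b_v + p$ does not change the group, as $J^p = 1$). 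So it is enough to treat one local move.

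Next I would prove the one-move case: if $e_0 = \{u,v\}$ is an edge of $G$ and $b' = b + \delta_u - \delta_v$, then $\Gamma_p(I(G),b) \iso \Gamma_p(I(G),b')$ over $\Z_p$. By \Cref{lem:changeorientation} we may orient $G$ so that $e_0$ is directed from $v$ to $u$, i.e. $s(e_0) = v$, $t(e_0) = u$. Then the defining relation at $u$ in $\Gamma_p(I(G),b)$ reads $x_{e_0} \cdot w_u = J^{b_u}$ and the relation at $v$ reads $x_{e_0}^{-1} \cdot w_v = J^{b_v}$, where $w_u, w_v$ are the subwords coming from the other edges at $u$ and $v$. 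Define a map $\phi$ on generators by $x_{e_0} \mapsto J^{-1} x_{e_0}$ and $x_e \mapsto x_e$ for all other $e$, and $J \mapsto J$. Since $J$ is central (relation (4) of \Cref{def:solngroup}), this substitution turns the relation at $u$ into $J^{-1} x_{e_0} w_u = J^{b_u}$, i.e. $x_{e_0} w_u = J^{b_u + 1} = J^{b'_u}$, and turns the relation at $v$ into $J x_{e_0}^{-1} w_v = J^{b_v}$, i.e. $x_{e_0}^{-1} w_v = J^{b_v - 1} = J^{b'_v}$; every other vertex relation is untouched, as are the order and commutation relations (replacing $x_{e_0}$ by $J^{-1}x_{e_0}$ does not affect $x_{e_0}^p = 1$, $[x_{e_0},J]=1$, or $[x_{e_0},x_e]=1$, again using centrality of $J$ and $J^p = 1$). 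Hence $\phi$ defines a homomorphism $\Gamma_p(I(G),b) \to \Gamma_p(I(G),b')$ over $\Z_p$, and the inverse substitution $x_{e_0} \mapsto J x_{e_0}$ gives a two-sided inverse, so $\phi$ is an isomorphism over $\Z_p$.

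Finally I would assemble the pieces: chaining the one-move isomorphisms along the sequence of local moves connecting $b$ to (a representative of) $b'$ gives the desired isomorphism $\Gamma_p(I(G),b) \to \Gamma_p(I(G),b')$ over $\Z_p$, since each step sends $J \mapsto J$ and a composite of homomorphisms over $\Z_p$ is again one. The main obstacle is the bookkeeping in the first step --- verifying that connectivity really does let one realize an arbitrary zero-sum (mod $p$) colouring difference by a finite sequence of edge moves. This is a standard fact: fixing a spanning tree $T$ of $G$, any colouring $c$ with $|c| = 0$ in $\Z$ can be written uniquely as an integer combination of the moves $\delta_u - \delta_v$ over edges of $T$ (by induction on the number of leaves, peeling off a leaf $\ell$ and transferring its charge $c_\ell$ along its tree edge), and the mod-$p$ condition is absorbed by first adjusting $b$ by multiples of $p$ at a single vertex, which leaves $\Gamma_p(I(G),b)$ literally unchanged. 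Everything else is the routine relation-chasing above, all of which works uniformly for $2 \leq p \leq +\infty$ (for $p = \infty$ the order relations are absent and one simply needs $|b| = |b'|$ in $\Z$).
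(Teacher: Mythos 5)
Your proposal is correct and is essentially the paper's own argument: the paper defines, for each edge $e$, the vector $b(e)=\delta_{t(e)}-\delta_{s(e)}$ and realizes the change $b\mapsto b+\lambda b(e)$ by the isomorphism sending $x_e\mapsto J^{\mp\lambda}x_e$ (fixing $J$ and all other generators), then uses connectivity to write the difference of colourings (modulo $p$, which the presentation only sees since $J^p=1$) as a combination of such edge moves. Your spanning-tree bookkeeping and single-vertex mod-$p$ adjustment are just a more explicit version of the same reduction, and your one-move relation check is the same computation with $\lambda=\pm1$.
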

\begin{proof}
    Fix some orientation for the edges of $G$. If $e \in E$, let $b(e) \in \Z^V$
    be the vector such that $b(e)_{s(e)} = -1$, $b(e)_{t(e)} = 1$, and $b(e)_v = 0$
    otherwise. For any $\lambda \in \Z$, there is an isomorphism
    \begin{equation*}
        \Gamma_p(I(G),b) \to \Gamma_p(I(G),b+\lambda b(e)) : J \mapsto J, x_f \mapsto \begin{cases}
                J^{\lambda} x_f & f = e \\
            x_f & f \neq e \end{cases}.
    \end{equation*}
    $G$ is connected if and only if the vectors $b(e)$ span the subgroup $\{c
    \in \Z^V : |c|=0\}$. If $|b| = |b'|$ in $\Z_p$ then there is $c \in \Z^V$
    with $|c|=0$ such that $b+c = b'$ in $\Z_p$. By writing $c$ as a linear
    combination of vectors $b(e)$ and successively applying the isomorphisms
    above, we can get an isomorphism $\Gamma_p(I(G),b) \to \Gamma_p(I(G),b+c)
    = \Gamma_p(I(G),b')$. 
\end{proof}
If $(G,b)$ is a disconnected $\Z$-weighted graph with components $G_i$,
and $b_i = b|_{V(G_i)}$, then
\begin{equation*}
    \Gamma_p(I(G),b) \iso \ast_J\ \Gamma_p(I(G_i), b_i), 
\end{equation*}
the group constructed by taking the free product of the groups
$\Gamma_p(I(G_i),b_i)$ and then identifying the elements $J$ in each
factor. As a result, we can restrict our investigation to connected graphs.

\Cref{lem:changeb} can be combined with the following proposition for general
linear systems. In stating the proposition, we use standard arithmetic with the
extended natural numbers, so for instance any positive integer divides $+\infty$.
\begin{prop}\label{prop:changeb}
    Let $Ax=b$ be an $m \times n$ linear system over $\Z$.
    \begin{enumerate}[(a)]
        \item If $2 \leq p,q \leq +\infty$ and $\alpha,\beta,\lambda,\delta$ are integers such that
            $q$ divides $p \gcd(\lambda,\delta)$ and $\delta \alpha = \beta \lambda$ in $\Z_q$,
            then there is a homomorphism
            \begin{equation*}
                \Gamma_{p}(A,\alpha b) \to \Gamma_q(A,\beta b)
            \end{equation*}
            sending $x_j \mapsto x_j^{\lambda}$ for all $j \in [n]$ and $J \mapsto J^{\delta}$.
        \item Suppose $2 \leq r,s < +\infty$ are coprime. If $|J|=k$ in $\Gamma_r(A,b)$
            and $|J|=l$ in $\Gamma_s(A,b)$, then $|J|=kl$ in $\Gamma_{rs}(A,b)$. 
        \item If $2 \leq p \leq +\infty$ and $t$ is a non-zero divisor in $\Z_p$, then
            $|J|=p$ in $\Gamma_p(A,b)$ if and only if $|J|=p$ in $\Gamma_p(A,t b)$. If 
            $p$ is finite, then these two groups are isomorphic over $\Z_p$.
        \item Suppose $2 \leq p < +\infty$ and $t$ is a positive integer. If $|J|=p$ in $\Gamma_p(A,b)$, then 
            $|J|=tp$ in $\Gamma_{tp}(A,tb)$.
        \item If $b=0$, then $|J|=p$ in $\Gamma_p(A,b)$ for all $2 \leq p \leq +\infty$.
    \end{enumerate}
\end{prop}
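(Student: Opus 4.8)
The plan is to prove the five parts in the order (a), (c), (e), (b), (d), since (a) is the workhorse that most of the others will invoke. For part (a), the strategy is simply to check that the proposed assignment $x_j \mapsto x_j^\lambda$, $J \mapsto J^\delta$ extends to a group homomorphism by verifying that each defining relation of $\Gamma_p(A, \alpha b)$ maps to a relation (or consequence of the relations) in $\Gamma_q(A, \beta b)$. Concretely: the order relations $x_j^p = 1$ map to $x_j^{\lambda p} = 1$, which holds in $\Gamma_q(A,\beta b)$ because $x_j^q = 1$ and $q \mid p\lambda$ (a fortiori $q \mid p\gcd(\lambda,\delta)$); similarly $J^p=1$ maps to $J^{\delta p}=1$, valid since $q \mid p\delta$. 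The equation relations $\prod_j x_j^{A_{ij}} = J^{\alpha b_i}$ map to $\prod_j x_j^{\lambda A_{ij}} = \bigl(\prod_j x_j^{A_{ij}}\bigr)^\lambda = J^{\beta b_i \lambda}$, using that the $x_j$ occurring in equation $i$ pairwise commute; this equals $J^{\delta \alpha b_i}$ since $\delta\alpha = \beta\lambda$ in $\Z_q$ and $J$ has order dividing $q$. The commutator relations are preserved since powers of commuting elements commute, and $[x_j, J] = 1$ gives $[x_j^\lambda, J^\delta]=1$. When $p = \infty$ the order relations are absent and the divisibility hypotheses are vacuous (everything divides $+\infty$), so the check is even shorter. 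The main subtlety to be careful about is bookkeeping the two moduli $p$ and $q$ correctly, but no step is deep.

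For part (c), apply part (a) twice. Since $t$ is a non-zero divisor in $\Z_p$, there is $t'$ with $t t' \equiv 1 \pmod{p}$ when $p$ is finite (and when $p=\infty$, $t$ is a non-zero integer and we argue slightly differently, or note $t' = 1/t$ formally---better to handle finite and infinite cases in parallel by exhibiting explicit inverse maps). Taking $(\alpha,\beta,\lambda,\delta) = (1,t,1,t)$ in part (a) with $q = p$ gives a homomorphism $\Gamma_p(A,b) \to \Gamma_p(A,tb)$ sending $J \mapsto J^t$; taking $(\alpha,\beta,\lambda,\delta) = (t,1,t',t')$ (or the appropriate inverse data) gives a homomorphism back sending $J \mapsto J^{t'}$. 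Composing, $J \mapsto J^{tt'} = J$, so these are mutually inverse isomorphisms over $\Z_p$ when $p$ is finite; in all cases the order of $J$ is preserved, so $|J|=p$ on one side iff $|J|=p$ on the other. Part (e) ($b=0$) is immediate: the assignment $x_j \mapsto 1$, $J \mapsto \omega_p$ (equivalently, the quotient $\Gamma_p(A,0) \to \Z_p$ killing all $x_j$) is a well-defined homomorphism since every relation becomes trivial, and it sends $J$ to an element of order exactly $p$, so $|J| \geq p$; the reverse inequality $|J| \leq p$ holds by the order relation $J^p = 1$ (and is vacuous for $p = \infty$, where one just notes the quotient $\Z$ realizes infinite order).

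For part (b), the plan is: $\Gamma_{rs}(A,b)$ surjects onto both $\Gamma_r(A,b)$ and $\Gamma_s(A,b)$ via part (a) (with $\lambda = \delta = 1$, using $r \mid rs$ and $s \mid rs$), and these send $J$ to $J$. Hence the image of $J$ in $\Gamma_r(A,b)$ has order $k$ and in $\Gamma_s(A,b)$ has order $l$, so the order $N$ of $J$ in $\Gamma_{rs}(A,b)$ is a common multiple of $k$ and $l$, hence divisible by $\mathrm{lcm}(k,l)$; since $k \mid r$, $l \mid s$ and $\gcd(r,s)=1$ we get $\gcd(k,l)=1$ so $\mathrm{lcm}(k,l) = kl$, giving $kl \mid N$. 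On the other hand $N \mid rs$ from the relation $J^{rs}=1$. To pin down $N = kl$ exactly, note $kl \mid N \mid rs$ and use that the natural maps are compatible: more cleanly, by part (a) there is a homomorphism $\Gamma_{rs}(A,b) \to \Gamma_r(A,b) \times_{?}$ --- actually the cleanest finish is to observe $N \mid rs$, $k \mid N$, $l \mid N$ forces $kl \mid N$, and then use the CRT-style bound $N \mid \mathrm{lcm}$ of the orders which one extracts by mapping $\Gamma_{rs}(A,b)$ diagonally; I would verify that $\Gamma_{rs}(A,b) \to \Gamma_r(A,b) \times \Gamma_s(A,b)$ restricted to $\langle J \rangle$ is injective because $J$ has order dividing $rs = \mathrm{lcm}(r,s)$ and its images have orders $k, l$ with $\langle J\rangle \hookrightarrow \Z_k \times \Z_l$. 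Finally, part (d): if $|J| = p$ in $\Gamma_p(A,b)$, apply part (a) with $q = tp$, $\alpha = 1$, $\beta = t$, $\lambda = \delta = t$ to get a map $\Gamma_p(A,b) \to \Gamma_{tp}(A,tb)$ sending $J \mapsto J^t$ (check: $tp \mid p\gcd(t,t) = pt$, and $\delta\alpha = t = t\cdot 1 \cdot? $ wait $\beta\lambda = t\cdot t = t^2$; so we need $t = t^2$ in $\Z_{tp}$, i.e. $tp \mid t^2 - t = t(t-1)$, i.e. $p \mid t-1$ --- this does not hold in general, so the correct data is $\lambda = 1$, $\delta = t$: then $\beta\lambda = t = \delta\alpha = t$, and $tp \mid p\gcd(1,t) = p$? that fails too). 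The right approach for (d) is instead to use part (c) to reduce to $\Gamma_{tp}(A,tb) \cong$-related to $\Gamma_{tp}(A,b)$, or to go through the map $\Gamma_{tp}(A,tb) \to \Gamma_p(A,b)$ given by part (a) with $(\alpha,\beta,\lambda,\delta)=(t,1,1,?)$ --- the hard part here, and the one place I expect real friction, is choosing the four integers in part (a) consistently so that the divisibility constraint $q \mid p\gcd(\lambda,\delta)$ and the linear constraint $\delta\alpha \equiv \beta\lambda$ are simultaneously met while still moving $J$ by an invertible amount; I would sort this out by first writing down the desired map on generators and then reverse-engineering $(\alpha,\beta,\lambda,\delta)$, likely combining a part-(a) homomorphism with a part-(c) twist and the order relation $J^{tp}=1$ to squeeze $|J|$ from both sides.
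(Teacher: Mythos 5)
Your parts (a) and (e) are fine and match the paper's argument, but there are two genuine gaps. In part (b), you establish only $kl \mid N$ and $N \mid rs$, and your proposed finish is circular: injectivity of $\langle J\rangle \to \Gamma_r(A,b)\times\Gamma_s(A,b)$ is exactly the statement $N \mid kl$ that you are trying to prove, and it does not follow from ``$J$ has order dividing $rs$'' (the kernel of $\langle J\rangle \to \Z_k\times\Z_l$ is generated by $J^{kl}$, which is trivial iff $N\mid kl$). The missing step is to map \emph{into} $\Gamma_{rs}(A,b)$ rather than out of it: part (a) with $p=r$, $q=rs$, $\alpha=\beta=1$, $\delta=\lambda=s$ gives $\Gamma_r(A,b)\to\Gamma_{rs}(A,b)$, $J\mapsto J^s$, so $J^{ks}=1$ in $\Gamma_{rs}(A,b)$ and $N \mid ks$; symmetrically $N\mid rl$, and $\gcd(ks,rl)=kl$ closes the argument.

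Part (d) is the larger gap, and you correctly sense the friction: no choice of $(\alpha,\beta,\lambda,\delta)$ in part (a) will work, because any homomorphism out of $\Gamma_{tp}(A,tb)$ can only certify $|J|=tp$ if the image of $J$ has order $tp$ in the target, and $\Gamma_p(A,b)$ (where $|J|=p$) cannot serve. The paper's idea is to manufacture a new target: let $K$ be obtained from the presentation of $\Gamma_p(A,b)$ by adjoining a central generator $J_0$ with $J=J_0^t$; since $|J|=p$, $K$ is the central product of $\Gamma_p(A,b)$ with $\Z_{pt}=\langle J_0\rangle$ along $\langle J\rangle=\langle J_0^t\rangle$, so $|J_0|=pt$ in $K$, and checking relations gives a homomorphism $\Gamma_{tp}(A,tb)\to K$, $x_i\mapsto x_i$, $J\mapsto J_0$, forcing $|J|=tp$. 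Your proposal never reaches this construction. Separately, in part (c) your ``inverse data'' $(\alpha,\beta,\lambda,\delta)=(t,1,t',t')$ violates the constraint $\delta\alpha=\beta\lambda$ (it needs $t't=t'$ in $\Z_p$), and your two maps do not compose to the identity on the $x_j$; the clean choice is the paper's $\phi:\Gamma_p(A,tb)\to\Gamma_p(A,b)$, $x_i\mapsto x_i^t$, $J\mapsto J$ (i.e. $\alpha=\lambda=t$, $\beta=\delta=1$), with inverse $x_i\mapsto x_i^s$, $J\mapsto J$ where $st=1$ in $\Z_p$, plus the separate two-map order argument for $p=+\infty$. This slip is repairable, but (b) and (d) need the ideas above.
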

Similar results to parts (a) and (b) have previously appeared in \cite[Section 6.3]{OK20}
and \cite[Corollary 3.21]{COS}.
\begin{proof}
    For part (a), $J^{\delta}$ commutes with $x_j^{\lambda}$ in
    $\Gamma_q(A,\beta b)$ for all $j \in [n]$, and if $A_{ij}, A_{ik} \neq 0$ for
    some $i \in [m]$, $j,k \in [n]$, then $x_j^{\lambda}$ and $x_k^{\lambda}$
    commute as well. For all $i \in [m]$, we have
    \begin{equation*}
        \prod_j (x_j^{\lambda})^{A_{ij}} = J^{\lambda \beta b_i} = (J^{\delta})^{\alpha b_i}
    \end{equation*}
    in $\Gamma_q(A,b)$. Finally, if $p < +\infty$, then $q < +\infty$, and 
    $J^{p \delta} = x_j^{p \lambda} = 1$ in $\Gamma_q(A,b)$ since $q$ divides
    $p\delta$ and $p \lambda$. It follows that there is a homomorphism as required. 

    For part (b), let $|J|=a$ in $\Gamma_{rs}(A,b)$. Applying part (a) with
    $p=rs$, $q=r$, and $\alpha=\beta=\delta=\lambda=1$ gives a homomorphism
    $\Gamma_{rs}(A,b) \to \Gamma_r(A,b)$ sending $J \mapsto J$, and similarly
    there is a homomorphism $\Gamma_{rs}(A,b) \to \Gamma_{s}(A,b)$ sending $J
    \mapsto J$. Thus there is a homomorphism $\Gamma_{rs}(A,b) \to
    \Gamma_{r}(A,b) \times \Gamma_s(A,b)$ sending $J \mapsto (J,J)$. Since $k$
    and $l$ are coprime (as they divide $r$ and $s$ respectively), we conclude
    that the order of $(J,J)$ in the product group is $kl$, so $kl$ divides
    $a$. Applying part (a) with $p=r$, $q=rs$, $\alpha=\beta=1$, and $\delta=\lambda=s$,
    we see that there is a homomorphism $\Gamma_r(A,b) \to \Gamma_{rs}(A,b)$ sending
    $J \mapsto J^s$, from which we conclude that $a$ divides $ks$. Similarly, $a$
    divides $rl$, so $a$ divides $\gcd(ks,rl)=kl$, and hence $a=kl$.

    For part (c), applying part (a) with $q=p$, $\alpha = \lambda = t$, and
    $\delta = \beta = 1$ gives a homomorphism $\phi : \Gamma_p(A,tb) \to
    \Gamma_p(A,b)$ sending $x_i \mapsto x_i^{t}$ and $J \mapsto J$. If $p < +\infty$,
    then $t$ is coprime to $p$, and hence there is an integer $s$ such that $st=1$
    in $\Z_p$. Replacing $t$ with $s$ and $b$ with $tb$, we get a homomorphism
    $\Gamma_p(A,b) = \Gamma_p(A,st b) \to \Gamma_p(A,tb)$ sending $x_i \mapsto x_i^s$
    and $J \mapsto J$, and this is an inverse to $\phi$. If $p=+\infty$, then 
    $t$ does not have to be invertible in $\Z_p = \Z$, but we can still conclude
    from $\phi$ that if $|J|=+\infty$ in $\Gamma_p(A,b)$, then $|J|=+\infty$ in
    $\Gamma_p(A,tb)$. For the reverse, applying part (a) with $q=p$,
    $\alpha=\lambda=1$, and $\delta=\beta=t$ gives a homomorphism $\phi :
    \Gamma_p(A,b) \to \Gamma_p(A,tb)$ sending $J \mapsto J^t$. Hence if
    $|J|=+\infty$ in $\Gamma_p(A,tb)$, then $|J|=+\infty$ in $\Gamma_p(A,b)$.

    For part (d), let $K$ be the finitely presented group constructed from the
    presentation of $\Gamma_p(A,b)$ by adding a new generator $J_0$ which commutes
    with all other generators, along with the relation $J = (J_0)^t$. Since
    $J_0^{tp}=1$ in $K$, the group $K$ is the central product of $\Gamma_p(A,b)$
    with $\Z_{pt} = \ang{J_0}$ identifying the subgroups $\ang{J} = \Z_p = 
    \ang{J_0^t}$. As a result, $\Gamma_p(A,b)$ and $\Z_{pt}$ are both subgroups
    of $K$, and in particular $|J_0| = pt$. Checking relations as in part (a),
    we see that there is a homomorphism $\Gamma_{tp}(A,tb) \to K$ sending
    $x_i \mapsto x_i$ and $J \mapsto J_0$, so $J$ has order $tp$ in $\Gamma_{tp}(A,tb)$. 

    Part (e) follows immediately from \Cref{def:solngroup}, since $J$ does not occur
    in any relations when $b=0$.
\end{proof}

Combining \Cref{prop:changeb} with \Cref{lem:changeb}, we immediately get:
\begin{cor}\label{cor:changeb}
    Let $(G,b)$ be a $\Z$-weighted connected graph with $|b|=k \in \Z_p$, and
    suppose $2 \leq p \leq +\infty$. 
    \begin{enumerate}[(a)]
        \item If $p=rs$ where $r$ and $s$ are coprime, then $|J|=p$ in $\Gamma_p(I(G),b)$
            if and only if $|J|=r$ in $\Gamma_r(I(G),b)$ and $|J|=s$ in $\Gamma_s(I(G),b)$.  
        \item Let $\ell$ be the order of $k$ in $\Z_p$ (for instance, if $k$ divides $p$ then
            $\ell = p/k$). Then $|J|=p$ in $\Gamma_p(I(G),b)$ if and only if
            $|J|=p$ in $\Gamma_p(I(G),b')$ for all other $\Z$-weightings $b'$
            such that $|b'|$ has order $\ell$ in $\Z_p$.
        \item If $k=0$, then $|J|=p$ in $\Gamma_p(I(G),b)$. 
    \end{enumerate}
\end{cor}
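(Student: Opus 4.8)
The plan is to deduce all three parts directly from \Cref{prop:changeb} and \Cref{lem:changeb}: \Cref{lem:changeb} lets us replace a $\Z$-colouring by any other colouring of $G$ with the same total weight in $\Z_p$, and \Cref{prop:changeb} lets us rescale the weight. For part (c), I would apply \Cref{lem:changeb} with the zero colouring (legitimate since $G$ is connected and $|0| = 0 = k$ in $\Z_p$) to get $\Gamma_p(I(G),b) \iso \Gamma_p(I(G),0)$ over $\Z_p$, and then invoke \Cref{prop:changeb}(e), which gives $|J| = p$ in $\Gamma_p(I(G),0)$, hence in $\Gamma_p(I(G),b)$.

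For part (a), I may assume $r, s \geq 2$ (otherwise the statement is vacuous), so both are finite. Then \Cref{prop:changeb}(b) applied to $A = I(G)$ says: if $|J| = \ell_r$ in $\Gamma_r(I(G),b)$ and $|J| = \ell_s$ in $\Gamma_s(I(G),b)$, then $|J| = \ell_r \ell_s$ in $\Gamma_{rs}(I(G),b) = \Gamma_p(I(G),b)$. Since $\ell_r \mid r$, $\ell_s \mid s$, and $\gcd(r,s) = 1$, we have $\ell_r\ell_s = rs$ if and only if $\ell_r = r$ and $\ell_s = s$, which is exactly the claimed equivalence.

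For part (b), the ``if'' direction is immediate since $|b| = k$ itself has order $\ell$. For the converse, the key step is a number-theoretic claim: if $b'$ is a $\Z$-colouring of $G$ with $|b'|$ also of order $\ell$ in $\Z_p$, then there are non-zero divisors $t, t'$ in $\Z_p$ with $t \cdot |b| = t' \cdot |b'|$ in $\Z_p$. Granting this, \Cref{prop:changeb}(c) gives $|J| = p$ in $\Gamma_p(I(G),b)$ iff $|J| = p$ in $\Gamma_p(I(G),tb)$; since $|tb| = t|b| = t'|b'| = |t'b'|$ in $\Z_p$ and $G$ is connected, \Cref{lem:changeb} gives $\Gamma_p(I(G),tb) \iso \Gamma_p(I(G),t'b')$ over $\Z_p$; and \Cref{prop:changeb}(c) once more gives $|J| = p$ in $\Gamma_p(I(G),t'b')$ iff $|J| = p$ in $\Gamma_p(I(G),b')$. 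Chaining these equivalences finishes part (b).

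It remains to prove the claim. When $p = +\infty$, either $\ell = 1$, so $k = 0$ and part (c) already applies to both $b$ and $b'$, or $\ell = +\infty$, in which case $|b|$ and $|b'|$ are both non-zero in $\Z$ and one can take $t = |b'|$, $t' = |b|$. When $p$ is finite, the non-zero divisors of $\Z_p$ are exactly the units, and writing $d = \gcd(|b|,p) = p/\ell = \gcd(|b'|,p)$ we have $|b| = d u$ and $|b'| = d u'$ with $u, u'$ coprime to $\ell$; since the reduction map $\Z_p^\times \to (\Z/\ell\Z)^\times$ is surjective (by the Chinese remainder theorem), we may pick a unit $t \in \Z_p^\times$ with $t \equiv u' u^{-1} \pmod{\ell}$, whence $t \cdot |b| = |b'|$ in $\Z_p$, and we take $t' = 1$. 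The only real content here is this claim; I expect the surjectivity of the reduction map together with keeping the $p = +\infty$ case straight to be the fiddliest points, though neither is deep, and everything else is bookkeeping with the two cited results.
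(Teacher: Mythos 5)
Your proposal is correct and follows exactly the route the paper intends: the corollary is stated there without proof as an immediate combination of \Cref{prop:changeb} and \Cref{lem:changeb}, which is precisely what you carry out. Your fleshed-out details — reducing part (b) to finding non-zero divisors $t,t'$ with $t|b|=t'|b'|$, via surjectivity of $\Z_p^\times \to (\Z/\ell\Z)^\times$ for finite $p$ and the two-scalar trick $t=|b'|$, $t'=|b|$ for $p=+\infty$ — are sound and fill in exactly what the paper's ``immediately'' glosses over.
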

By part (a) of \Cref{cor:changeb}, the question of whether $|J|=p$ in
$\Gamma_p(I(G),b)$ can be reduced to the case that $p$ is $+\infty$ or a prime
power. Part (b) shows that whether $|J|=p$ in $\Gamma_p(I(G),b)$ depends only
on the order of $|b|$ in $\Z_p$. If $p$ is $+\infty$ or prime, then every
non-zero element of $\Z_p$ has order $p$, so $|J|=p$ in $\Gamma_p(I(G),b)$ for
some $\Z$-weighting $b$ with $|b| \neq 0$ in $\Z_p$ if and only if $|J|=p$ in
$\Gamma_p(I(G),b')$ for all $\Z$-weightings $b'$ with $|b'| \neq 0$ in $\Z_p$.
If $p$ is a prime power, i.e. $p=q^a$ for some prime $q$ and $a \geq 1$, then
the set of possible orders of non-zero elements of $\Z_p$ is $\{q^i : 1 \leq i
\leq a\}$. As the following example shows, whether or not $|J|=p$ in $\Gamma_p(I(G),b)$
can depend on the order of $|b|$, and not just on whether $|b|$ is non-zero.
\begin{example}
    Let $b$ be a $\Z$-weighting of $K_{3,3}$ with $|b|=1$. By \Cref{ex:K33}, $|J|<4$
    in $\Gamma_4(I(K_{3,3}),b)$. However, since $|J|=2$ in $\Gamma_2(I(K_{3,3}),b)$,
    it follows from part (d) of \Cref{prop:changeb} that $|J|=4$ in 
    $\Gamma_4(I(K_{3,3}),2b)$. 
\end{example}

Recall that a graph $G'$ is an algebraic graph minor of a graph $G$ if it is
possible to construct $G'$ from $G$ via a sequence of edge deletions, edge
contractions, and vertex deletions. A graph $G$ avoids a set
of graphs $\mcF$ if there is no $G' \in \mcF$ which is a minor of $G$.  A
property $(P)$ of (connected) graphs $G$ is said to be minor-closed if
whenever a (connected) graph $G'$ is a minor of a connected graph $G$ satisfying
$(P)$, then $G'$ also satisfies $(P)$. The following is a consequence of the
Robertson-Seymour theorem:
\begin{theorem}[\cite{RS04}]\label{thm:robertsonseymour}
    Let $(P)$ be a minor-closed property of connected graphs. Then there is a
    finite set of connected graphs $\mcF$ such that a connected graph $G$ has
    $(P)$ if and only if $G$ avoids $\mcF$. 
\end{theorem}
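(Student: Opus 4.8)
The plan is to let $\mcF$ be the set of isomorphism classes of connected graphs that do not satisfy $(P)$ but are minimal with this property in the minor order; that is, a connected graph $H$ lies in $\mcF$ if $H$ does not satisfy $(P)$ but every connected graph that is a proper minor of $H$ does satisfy $(P)$. The argument then has two parts: checking that this $\mcF$ characterizes $(P)$, and checking that $\mcF$ is finite.

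First I would show that a connected graph $G$ satisfies $(P)$ if and only if $G$ avoids $\mcF$. If $G$ satisfies $(P)$ and some $H \in \mcF$ were a minor of $G$, then $H$ would be a connected minor of $G$, so by minor-closedness of $(P)$ the graph $H$ would satisfy $(P)$, contradicting $H \in \mcF$. Conversely, suppose $G$ does not satisfy $(P)$. Among the connected minors of $G$ that fail $(P)$ --- a set that is finite, since the minors of $G$ fall into finitely many isomorphism types, and nonempty, since it contains $G$ --- choose $G_0$ minimizing the number of vertices plus edges. Any edge deletion, vertex deletion, or edge contraction strictly decreases this quantity, so every connected graph that is a proper minor of $G_0$ is a connected minor of $G$ with strictly fewer vertices and edges, and hence, by minimality of $G_0$, satisfies $(P)$. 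Therefore $G_0 \in \mcF$, and $G_0$ is a minor of $G$, so $G$ does not avoid $\mcF$.

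It remains to see that $\mcF$ is finite, and this is the one substantial input --- the step I expect to be the crux. First, $\mcF$ is an antichain under the minor relation: if $H_1, H_2 \in \mcF$ are distinct and $H_1$ is a minor of $H_2$, then $H_1$ is a proper minor of $H_2$ that is connected, so by minimality of $H_2$ it satisfies $(P)$, contradicting $H_1 \in \mcF$. Now the depth of the Robertson-Seymour theorem \cite{RS04} enters: the finite graphs are well-quasi-ordered by the minor relation, so there is no infinite antichain, and hence $\mcF$ is finite. Everything else in the proof is elementary book-keeping --- the minor order on finite graphs is well-founded because each elementary minor operation reduces the vertex-plus-edge count, and the hypothesis that $(P)$ is minor-closed is used only in the two implications above. (One could instead reduce to the usual "all graphs" form of the theorem by passing to the property $(P')$ of arbitrary graphs defined by "every connected minor satisfies $(P)$", but the direct antichain argument above is cleaner and avoids dealing with possibly disconnected obstructions.)
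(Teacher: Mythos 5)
Your proposal is correct and takes essentially the same route as the paper: the paper offers no independent proof, presenting the statement simply as a consequence of the Robertson--Seymour graph minor (well-quasi-ordering) theorem \cite{RS04}, which is exactly the input you identify as the crux. Your surrounding bookkeeping---taking $\mcF$ to be the minor-minimal connected graphs failing $(P)$, noting these form an antichain, and using well-foundedness of the minor order---is the standard derivation and is carried out correctly.
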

A set $\mcF$ characterizing $(P)$ in this way is called a set of forbidden
minors for $(P)$. The following lemma can be used to show that many properties
of $G$ arising from $\Gamma_p((I(G),b)$ are minor-closed. 
\begin{lemma}[\cite{PRSS}]\label{lem:minorclosed}
    Suppose $(G,b)$ and $(G',b')$ are $\Z$-weighted connected graphs, and $2
    \leq p \leq +\infty$. If $G'$ is a minor of $G$ and $|b'| = |b|$ in $\Z_p$,
    then there is a surjective group homomorphism $\Gamma_p(I(G),b) \to
    \Gamma_p(I(G'),b')$ over $\Z_p$.
\end{lemma}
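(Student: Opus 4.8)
The plan is to reduce the statement to three elementary cases, one for each type of minor operation (edge deletion, edge contraction, vertex deletion), and to build the surjection as a composite. Since $G'$ is obtained from $G$ by a finite sequence of such operations, and since the hypothesis $|b'| = |b|$ in $\Z_p$ can be propagated through the sequence by \Cref{lem:changeb} (which lets us freely adjust the $\Z$-colouring of an intermediate connected graph within a fixed value of $|b|$), it suffices to handle a single operation at a time. Throughout I will use \Cref{lem:changeorientation} to fix orientations on all graphs in sight, so that $\Gamma_p(I(G),b)$ is presented as in \Cref{def:solngroup} by generators $J$ and $x_e$, $e \in E(G)$, with the vertex relations $\prod_{t(e)=v} x_e \cdot \prod_{s(e)=v} x_e^{-1} = J^{b_v}$, commuting relations for edges sharing a vertex, the relation $[x_e,J]=1$, and (if $p < \infty$) the order relations.

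First I would treat vertex deletion. Deleting a vertex $v$ of $G$ (necessarily together with its incident edges) to obtain a connected graph $G'$, one must pick a new $\Z$-colouring $b'$ on $V(G')$; by \Cref{lem:changeb} it is enough to exhibit a surjection for \emph{some} such $b'$, and the natural choice is to move the weight $b_v$ onto a neighbour, keeping $|b'| = |b|$. The homomorphism sends $x_e \mapsto 1$ for every edge $e$ incident to $v$, $x_e \mapsto x_e$ otherwise, and $J \mapsto J$; one checks the vertex relation at $v$ becomes $1 = J^{b_v}$ — which is why one instead first uses the colouring-change isomorphism to arrange $b_v = 0$ before deleting — and the vertex relations at the remaining vertices are unaffected. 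Surjectivity is clear since the surviving generators all lie in the image. Edge deletion is the analogous but simpler move: deleting a non-bridge edge $e$ (so $G'$ stays connected) with $b'=b$, send $x_e \mapsto 1$, fix all other generators and $J$; the only vertex relations involving $x_e$ are at its two endpoints, and killing $x_e$ just relaxes those relations, so the map is a well-defined surjection.

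The edge contraction case is the one I expect to require the most care, and it is really the heart of the lemma. Contracting an edge $e_0$ with endpoints $u$ and $w$ merges them into a single vertex $z$; the generator $x_{e_0}$ should be sent to $1$, every other $x_e$ to itself, and $J$ to $J$, with $b'_z := b_u + b_w$ and $b'$ unchanged elsewhere, so $|b'|=|b|$. The subtlety is that in $G'$ the edges formerly at $u$ and the edges formerly at $w$ now all share the vertex $z$, so $\Gamma_p(I(G'),b')$ imposes \emph{new} commuting relations between $x_e$ (for $e$ at $u$) and $x_f$ (for $f$ at $w$) that were not present in $\Gamma_p(I(G),b)$; thus the map cannot simply be ``quotient out $x_{e_0}$'' — one must verify the images satisfy all relations of the target. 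The key point making this work is that, after specializing $x_{e_0}\mapsto 1$, the two vertex relations at $u$ and $w$ in $\Gamma_p(I(G),b)$ combine (since $x_{e_0}^{\pm 1}$ cancels) into exactly the single vertex relation at $z$; and to get the new cross-commutators one observes that in $\Gamma_p(I(G'),b')$ they are consequences of that merged vertex relation together with the existing commutators — or, more cleanly, one argues in the reverse direction by exhibiting $\Gamma_p(I(G),b)/\langle\langle x_{e_0}\rangle\rangle$ explicitly and matching presentations. If contracting $e_0$ creates loops or parallel edges one handles them by the reductions already used in \Cref{lem:minimal}(a) (a loop edge contributes trivially and may be dropped). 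Assembling the three cases along the sequence of operations realizing the minor, and adjusting colourings by \Cref{lem:changeb} at each intermediate stage to keep the value of $|b|$ fixed, yields the desired surjection $\Gamma_p(G,b) \to \Gamma_p(G',b')$ over $\Z_p$.
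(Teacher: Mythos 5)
Your edge-deletion and vertex-deletion steps are fine (modulo the unproved but standard fact that the sequence of operations can be arranged so all intermediate graphs stay connected, which you need in order to invoke \Cref{lem:changeb}), but the contraction step --- which you correctly identify as the heart of the lemma --- does not work as written. Sending $x_{e_0}\mapsto 1$ and every other generator to itself is not a well-defined homomorphism $\Gamma_p(I(G),b)\to\Gamma_p(I(G/e_0),b')$. Well-definedness requires each \emph{source} relation to map to an identity of the target; but the image of the vertex relation at $u$ is $\prod_{e\ni u,\,e\neq e_0}x_e^{\pm 1}=J^{b_u}$, and this does not hold in $\Gamma_p(I(G/e_0),b')$: the only vertex relation there involving these generators is the merged relation at $z$, which forces the product of the two partial products (over edges formerly at $u$ and formerly at $w$) to equal $J^{b_u+b_w}$, not each partial product separately. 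Your concern about the new cross-commutators of the target is backwards --- extra relations in the target never obstruct well-definedness; it is precisely the source relations at $u$ and $w$ that fail. The ``matching presentations'' fallback fails for the same reason: $\Gamma_p(I(G),b)/\langle\langle x_{e_0}\rangle\rangle$ is, by inspection of the presentation, the solution group of $G$ with $e_0$ \emph{deleted} (two separate vertex relations at $u$ and $w$, no cross-commutators), not contracted, and the identity on generators does not carry it onto the contraction group. Also, citing \Cref{lem:minimal}(a) for loops is off-target, since that lemma is about pictures, not presentations.

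The repair is to send $x_{e_0}$ to a nontrivial element. Orient $e_0$ from $u$ to $w$, set $b'_z=b_u+b_w$, and define $\phi(J)=J$, $\phi(x_e)=x_e$ for $e\neq e_0$, and $\phi(x_{e_0})=J^{-b_u}\prod_{e\ni u,\,e\neq e_0}x_e^{\varepsilon_e}$, where $\varepsilon_e=\pm 1$ is the sign with which $x_e$ appears in the vertex relation at $u$. Then the relation at $u$ maps to a triviality by construction; the relation at $w$ maps to exactly the vertex relation at $z$ in the target; every commutator $[x_{e_0},x_f]$ with $f$ incident to $u$ or $w$ maps to an identity because all edges formerly at $u$ or $w$ are incident to $z$ in $G/e_0$ and hence commute with each factor of $\phi(x_{e_0})$; and $\phi(x_{e_0})^p=1$ and $[\phi(x_{e_0}),J]=1$ because those factors commute with each other and with $J$. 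Surjectivity is clear since every target generator is in the image. With this substitution (and the connectivity bookkeeping above), your overall scheme --- one operation at a time, adjusting colourings via \Cref{lem:changeb} --- does yield the lemma, and this is in line with how the $p=2$ argument of \cite{PRSS}, which the paper simply cites and adapts, handles contraction.
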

\begin{proof}
    The proof in \cite{PRSS} is for $p=2$, but is easily adapted to
    the general case. 
\end{proof}

\begin{cor}\label{cor:minors}
    Let $2 \leq p \leq +\infty$ and $\ell$ be the order of an element of $\Z_p$. 
    Then there is a finite set of connected graphs $\mcF$ such that if $(G,b)$
    is a connected $\Z$-weighted
    graph where $|b|$ has order $\ell$ in $\Z_p$, then $|J|=p$ in
    $\Gamma_p(I(G),b)$ if and only if one of the graphs in $\mcF$ is a minor of
    $G$. 
\end{cor}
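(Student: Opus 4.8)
The plan is to show that the property ``$|J| < p$'' is minor-closed among connected graphs, and then to feed it into the Robertson--Seymour theorem (\Cref{thm:robertsonseymour}), using \Cref{lem:minorclosed} to transport the relevant homomorphism along a minor. Fix $2 \le p \le +\infty$ and $k \in \Z_p$. For a connected graph $G$, declare that $(P)$ holds for $G$ if $|J| < p$ in $\Gamma_p(G,b)$ for some $\Z$-colouring $b$ with $|b| = k$ in $\Z_p$. Every connected graph is nonempty, so it admits such a colouring (put an integer representative of $k$ on one vertex and $0$ elsewhere), and by \Cref{lem:changeb} the group $\Gamma_p(G,b)$ is independent, up to isomorphism over $\Z_p$, of the choice of such $b$; since an isomorphism over $\Z_p$ fixes $J$, the order of $J$ is the same for every such $b$. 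Thus $(P)$ is a well-defined property of the connected graph $G$ that makes no reference to a particular colouring.

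Next I would verify that $(P)$ is minor-closed in the sense required by \Cref{thm:robertsonseymour}. Let $G$ be a connected graph satisfying $(P)$, let $b$ be a colouring of $G$ with $|b| = k$ witnessing this, and let $G'$ be a connected minor of $G$. Choose a $\Z$-colouring $b'$ of $G'$ with $|b'| = k = |b|$ in $\Z_p$. By \Cref{lem:minorclosed} there is a surjective homomorphism $\varphi : \Gamma_p(G,b) \to \Gamma_p(G',b')$ over $\Z_p$, i.e.\ one with $\varphi(J) = J$. Then $\varphi$ maps $\langle J \rangle$ onto $\langle J \rangle$, so the order of $J$ in $\Gamma_p(G',b')$ divides its order in $\Gamma_p(G,b)$; and since $|J| < p$ in $\Gamma_p(G,b)$ — a proper divisor of $p$ when $p$ is finite, and finite when $p = +\infty$ — the same holds in $\Gamma_p(G',b')$. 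Hence $G'$ satisfies $(P)$, so $(P)$ is minor-closed.

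Finally, applying \Cref{thm:robertsonseymour} to $(P)$ produces a finite set $\mcF$ of connected graphs such that a connected graph $G$ satisfies $(P)$ if and only if $G$ avoids $\mcF$. Negating: for a connected $\Z$-coloured graph $(G,b)$ with $|b| = k$, we have $|J| = p$ in $\Gamma_p(G,b)$ if and only if $G$ does not avoid $\mcF$, that is, if and only if some graph of $\mcF$ is a minor of $G$ — which is exactly the statement of \Cref{cor:minors}. There is no serious obstacle here: the only thing to get right is to set up the minor-closed property as the downward-closed side ``$|J| < p$'' rather than ``$|J| = p$'', and to use that the surjection of \Cref{lem:minorclosed} fixes $J$ so that it genuinely constrains the order of $J$; the rest is formal.
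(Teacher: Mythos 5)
Your proposal is correct and follows essentially the same route as the paper: define the property $(P)$ that $|J|<p$ for some (equivalently, by \Cref{lem:changeb}, any) colouring with $|b|=k$, show it is minor-closed via the surjection over $\Z_p$ from \Cref{lem:minorclosed}, and apply \Cref{thm:robertsonseymour}. The only addition is your explicit divisibility argument for why $|J|<p$ passes to the quotient, which the paper leaves implicit.
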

\begin{proof}
    Let $(P)$ be the property that $|J|<p$ in $\Gamma_p(I(G),b)$ for some vertex
    weighting $b : V(G) \to \Z$ where $|b|$ has order $\ell$ in $\Z_p$. By Lemma \ref{lem:changeb}, this
    is equivalent to the property that $|J|<p$ in $\Gamma_p(I(G),b)$ for all
    vertex weightings $b : V(G) \to \Z$ where $|b|$ has order $\ell$ in $\Z_p$. Suppose $(P)$ holds
    for a graph $G$, and let $b$ be a vertex weighting of $G$ such that
    $|b|$ has order $\ell$ in $\Z_p$. 
    If $G'$ is a connected minor of $G$ and $b'$ is a vertex weighting of $G'$
    with $|b'|=|b|$, then by \Cref{lem:minorclosed}, there is a surjective
    group homomorphism $\Gamma_p(I(G),b) \to \Gamma_p(I(G'),b')$ over $\Z_p$.
    It follows that $|J|<p$ in $\Gamma_p(I(G'),b')$, so $G'$ has property $(P)$. 
    Hence $(P)$ is minor-closed, and by \Cref{thm:robertsonseymour}, there is a
    finite list of forbidden minors characterizing $(P)$. 
\end{proof}
Per \Cref{cor:changeb}, if $|b|=0$ then we always have $|J|=p$ in $\Gamma_p(I(G),b)$,
and hence we can take $\mcF$ in \Cref{cor:minors} to consist of the graph
with no vertices and no edges. If $p=\ell=2$, then by a theorem of Arkhipov
mentioned in the introduction, we can take $\mcF$ to be the minors $K_{3,3}$ and
$K_5$ for planarity. The other cases are open:
\begin{problem}\label{prob:forbidden}
    Given $2 < p \leq +\infty$ and an order $\ell \neq 1$ of an element in $\Z_p$,
    find the forbidden minors $\mcF$ in \Cref{cor:minors}.
\end{problem}
In the case that $p$ is prime or $+\infty$, all elements have order $1$ or $p$.
Suppose that we restrict to simple graphs, i.e. graphs
with no loops, and at most one edge between any two vertices. The neighbour set
$N_G(v)$ of a vertex $v$ in a simple graph $G$ is the set of vertices $w$
which are adjacent to $v$.  A graph homomorphism $H \to G$ between
simple graphs $H$, $G$ is a function $\phi : V(H) \to V(G)$ such that if $v$
and $w$ are adjacent in $H$, then $\phi(v)$ and $\phi(w)$ are adjacent in $G$. 
A graph homomorphism is a cover if $\phi$ is surjective, and $\phi|_{N_H(v)}$
is a bijection between $N_H(v)$ and $N_G(\phi(v))$ for all $v \in V(H)$.  If
$G$ is connected and $\phi : H \to G$ is a cover, then there is a natural
number $k$, called the arity of $\phi$,  such that $|\phi^{-1}(v)|=k$
for all vertices $v \in V(G')$. A $k$-cover is a cover of arity $k$. A cover $H
\to G$ is planar if $H$ is a planar graph.

\begin{lemma}\label{lem:cover}
    Let $(G,b)$ be a $\Z$-weighted simple connected graph, and suppose $G$
    has a planar $k$-cover. Then $J^{k|b|} = 1$ in $\Gamma_p(I(G),b)$ for all
    $2 \leq p \leq +\infty$.
\end{lemma}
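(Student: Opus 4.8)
The plan is to build an explicit closed picture for $\Gamma_p(I(G),b)$ of phase $k|b|$ directly from the planar $k$-cover $\phi : H \to G$, mimicking the construction in \Cref{ex:K33} (where $K_{33}$ has a planar double cover, giving phase $2|b|$). Since $H$ is planar, fix an embedding of $H$ in $\R^2$; this embedding will be the underlying plane graph $\mcP$ of the picture, with vertex set $V(H)$ and edge set $E(H)$. Orient the edges of $H$: each edge $\tilde e$ of $H$ lies over a unique edge $e = \phi(\tilde e)$ of $G$ (here I use that $\phi$ is a cover of simple graphs, so $\tilde e$ is determined by its endpoints and maps to an honest edge of $G$), and we orient $\tilde e$ so that $\phi$ respects the chosen orientation on $G$ used to define $I(G)$. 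Then set $h_V(\tilde v) = \phi(\tilde v)$, $h_E(\tilde e) = \phi(\tilde e)$, $a(\tilde e) = 1$ for every edge, and $k(\tilde v) = b_{\phi(\tilde v)}$ for every vertex.

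Next I would check the two conditions in the definition of a closed picture. Condition (a) is immediate: if $\tilde e$ is incident to $\tilde v$ in $H$, then $e = \phi(\tilde e)$ is incident to $v = \phi(\tilde v)$ in $G$, so $h_E(\tilde e) \in R_{h_V(\tilde v)}(I(G))$. For condition (b), fix a vertex $\tilde v$ of $\mcP$ with $v = \phi(\tilde v)$. Because $\phi$ is a cover, $\phi$ restricts to a bijection from the edges of $H$ incident to $\tilde v$ onto the edges of $G$ incident to $v$ (here I use that $\phi|_{N_H(\tilde v)}$ is a bijection onto $N_G(v)$, together with $G$ being simple). Since orientations are respected, the incoming (resp.\ outgoing) edges at $\tilde v$ map bijectively to the incoming (resp.\ outgoing) edges at $v$, so the left-hand side of \Cref{eq:vertex} at $\tilde v$,
\begin{equation*}
    \prod_{t(\tilde e) = \tilde v} x_{h_E(\tilde e)} \cdot \prod_{s(\tilde e) = \tilde v} x_{h_E(\tilde e)}^{-1},
\end{equation*}
equals $\prod_{t(e) = v} x_e \cdot \prod_{s(e) = v} x_e^{-1}$, which is exactly the left-hand side of the defining relation of $\Gamma_p(I(G),b)$ at $v$, namely $J^{b_v} = J^{k(\tilde v)}$. (As in \Cref{ex:K33}, the order in which we take the product does not matter, since all the $x_e$ appearing are incident to the common vertex $v$ and hence commute in $\Gamma_p(I(G),b,v)$.) So \Cref{eq:vertex} holds in $\Gamma_p(I(G),b,v)$, and $\mcP$ is a closed picture.

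Finally, the phase of $\mcP$ is $\sum_{\tilde v \in V(H)} k(\tilde v) = \sum_{\tilde v} b_{\phi(\tilde v)}$. Grouping the vertices of $H$ over their images in $G$ and using that $\phi$ has arity $k$, i.e.\ $|\phi^{-1}(v)| = k$ for every $v \in V(G)$, this sum equals $k \sum_{v \in V(G)} b_v = k|b|$. By the van Kampen lemma, the existence of a closed picture of phase $k|b|$ gives $J^{k|b|} = 1$ in $\Gamma_p(I(G),b)$, as claimed. The only mild subtlety — the step I would be most careful about — is the bookkeeping around orientations and the identification of incoming/outgoing edges at a vertex upstairs with those downstairs; once one is careful that $\phi$ being a cover of \emph{simple} graphs makes the edge map a local bijection compatible with orientations, everything else is routine.
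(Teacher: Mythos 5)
Your proposal is correct and follows essentially the same route as the paper: the paper's proof likewise turns the planar $k$-cover $\phi : H \to G$ into a closed picture by setting $h_V = \phi$, labelling each edge of $H$ by its image edge in $G$, orienting edges of $H$ to match the chosen orientation of $G$, and taking $a(e)=1$ and $k(v)=b_{\phi(v)}$, so that the phase is $k|b|$ and the van Kampen lemma applies. Your more explicit verification of condition (b) via the local bijection on edges at each vertex is exactly the check the paper leaves implicit.
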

\begin{proof}
    The proof is exactly the same as \Cref{ex:K33}. Suppose we have chosen some
    orientation for the edges of $G$, so that $I(G)$ is defined. Let $\phi : H \to
    G$ be a planar $k$-cover of $G$. We can make $H$ into a picture for $\Gamma_p(I(G),b)$.
    Indeed, set $h_V = \phi$, and if $e \in E(H)$ has endpoints $v$ and $w$, let 
    $h_E(e)$ be the edge of $G$ with endpoints $\phi(v)$ and $\phi(w)$. Pick a plane
    embedding of $H$, and orient the edges of $H$ to match the edges of $G$, or
    in other words so that if $e \in E(H)$, then $s(h(e)) = h(s(e))$ and $t(h(e))
    = h(t(e))$. Finally set $a(e)=1$ for all $e \in E(H)$, and $k(v) = b_{\phi(v)}$
    for all $v \in V(H)$. Then $H$ is a picture with phase $k |b|$.
\end{proof}
If $|b|=1$ and $G$ has a planar $k$-cover, then the lemma implies that $J^k=1$
in $\Gamma_p(I(G),b)$. If $G$ is planar, then $G$ is a planar $1$-cover of
itself, and $J=1$ in $\Gamma_p(I(G),b)$. As we saw in \Cref{ex:K33} for
$K_{3,3}$, if $G$ has a planar double cover, then $J^2=1$ in $\Gamma_2(I(G),b)$,
and it is not possible to have $|J|=p$ in $\Gamma_p(I(G),b)$ unless $p=2$.  A
theorem of Negami states that a connected graph $G$ has a planar double cover
if and only if $G$ is projective planar \cite{Neg86}. No example of a graph
with a planar $k$-cover but no planar $2$-cover is known. Negami's
$1$--$2$--$\infty$ conjecture states that a connected graph $G$ has a planar
cover if and only if it is planar or projective planar, or equivalently if and
only if it has a planar $1$-cover or $2$-cover (\cite{Neg88}, see also the
survey \cite{Hli08}). The forbidden minors for projective planarity are known
\cite{GHW79,Arch81}, so if true, Negami's conjecture would determine the
forbidden minors for having a planar cover. 
We might wonder if the converse of \Cref{lem:cover} is true: does every
picture showing $J^k=1$ in $\Gamma_p(I(G),b)$ come from a planar $k$-cover of
$G$? If so, then Negami's conjecture would also imply that the minors
for \Cref{prob:forbidden} with $2 < p \leq +\infty$ and $\ell=p$ are just the
forbidden minors for projective planarity. Unfortunately, this isn't the
case, as the following example shows.

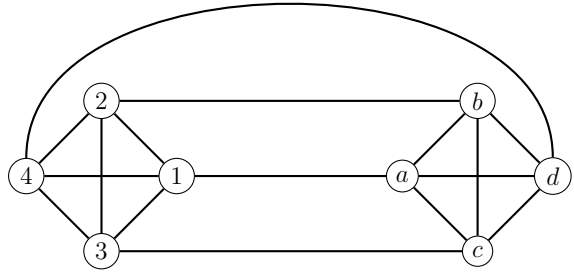
\begin{figure}
\begin{tikzpicture}[auto, thick, scale=.5,vertex/.style={circle,draw,thin,inner sep=2.5}, empty/.style={inner sep=0}, every node/.style={scale=.8}]
    \node[vertex] (1) at (2,0) {$1$};
    \node[vertex] (2) at (0,2) {$2$}
        edge (1);
    \node[vertex] (3) at (0,-2) {$3$}
        edge (1) edge (2);
    \node[vertex] (4) at (-2,0) {$4$}
        edge (1) edge (2) edge (3);

    \node[vertex] (a) at (8,0) {$a$}
        edge (1);
    \node[vertex] (b) at ($(a) + (2,2)$) {$b$}
        edge (a) edge (2);
    \node[vertex] (c) at ($(a) + (2,-2)$) {$c$}
        edge (a) edge (b) edge (3);
    \node[vertex] (d) at ($(a) + (4,0)$) {$d$}
        edge (a) edge (b) edge (c);

    \draw (d) to [out=90,in=90] (4);
\end{tikzpicture}
\caption{The graph $D_{17}$, which does not have a planar $2$-cover.}\label{fig:D17}
\end{figure}

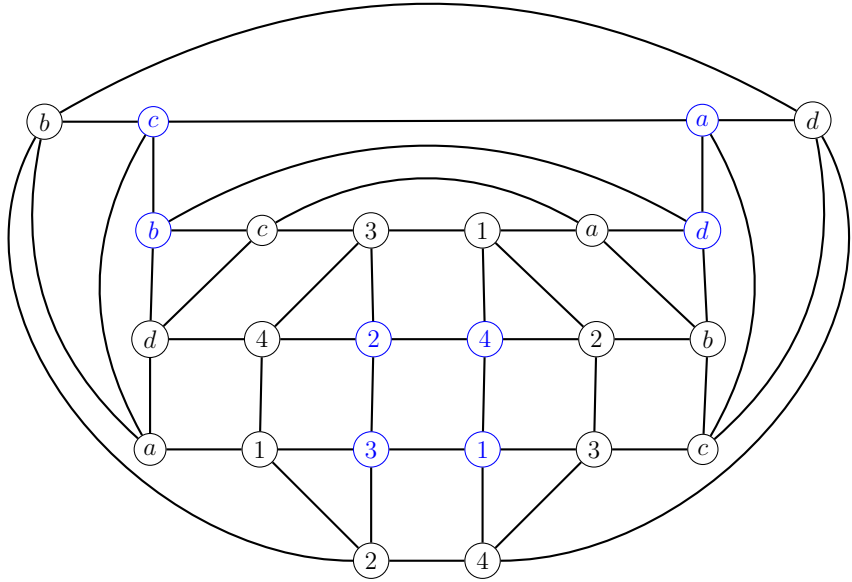
\begin{figure}
\begin{tikzpicture}[auto, thick, scale=.5,vertex/.style={circle,draw,thin,inner sep=2.5}, empty/.style={inner sep=0}, every node/.style={scale=.8}]

    \node[vertex] (c1) {$c$};
    \node[vertex] (31) [right=of c1] {$3$}
        edge (c1);
    \node[vertex] (11) [right=of 31] {$1$}
        edge (31);
    \node[vertex] (a1) [right=of 11] {$a$}
        edge (11) edge [bend right] (c1);

    \node[vertex] (41) [below=of c1] {$4$}
        edge (31);
    \node[vertex,color=blue] (21) [right=of 41] {$2$}
        edge (41) edge (31);
    \node[vertex,color=blue] (42) [right=of 21] {$4$}
        edge (21) edge (11);
    \node[vertex] (22) [right=of 42] {$2$}
        edge (42) edge (11);
    \node[vertex] (d1) [left=of 41] {$d$}
        edge (41) edge (c1);
    \node[vertex] (b1) [right=of 22] {$b$}
        edge (a1) edge (22);

    \node[vertex] (a2) [below=of d1] {$a$}
        edge (d1);
    \node[vertex] (12) [right=of a2] {$1$}
        edge (a2) edge (41);
    \node[vertex,color=blue] (32) [right=of 12] {$3$}
        edge (12) edge (21);
    \node[vertex,color=blue] (13) [right=of 32] {$1$}
        edge (32) edge (42);
    \node[vertex] (33) [right=of 13] {$3$}
        edge (13) edge (22);
    \node[vertex] (c2) [right=of 33] {$c$}
        edge (33) edge (b1);

    \node[vertex] (23) [below=of 32] {$2$}
        edge (12) edge (32);
    \node[vertex] (43) [right=of 23] {$4$}
        edge (13) edge (33) edge (23);

    \node[vertex,color=blue] (b2) [left=of c1] {$b$}
        edge (c1) edge (d1);
    \node[vertex,color=blue] (d2) [right=of a1] {$d$}
        edge (a1) edge (b1) edge [bend right] (b2);

    \node[vertex,color=blue] (c3) [above=of b2] {$c$}
        edge [bend right] (a2) edge (b2);
    \node[vertex,color=blue] (a3) [above=of d2] {$a$}
        edge (d2) edge [bend left] (c2) edge (c3);

    \node[vertex] (b3) [left=of c3] {$b$}
        edge [bend right] (a2) edge [out=-120,in=180] (23) edge (c3);
    \node[vertex] (d3) [right=of a3] {$d$}
        edge [bend right] (b3) edge (a3) edge [bend left] (c2) edge [out=-60,in=0] (43);
\end{tikzpicture}
\caption{A picture showing that $J^{2|b|}=1$ in $\Gamma_p(I(D_{17}),b)$.}\label{fig:D17pic}
\end{figure}

\begin{example}\label{ex:D17}
    The graph $D_{17}$, shown in \Cref{fig:D17} with vertex set $W=\{1,2,3,4,a,b,c,d\}$,
     is one of the forbidden minors for projective planarity
    \cite{GHW79}. Thus $D_{17}$ does not have a planar $2$-cover (in fact,
    it's known that $D_{17}$ does not have a  planar cover of any arity
    \cite{Neg88b}). 
    A picture $\mcP$ for $\Gamma_p(I(D_{17}),b)$ is shown in \Cref{fig:D17pic}. The
    vertices of $\mcP$ are labelled by elements of $W$ indicating the
    function $h_V$ of the picture. The function $h_E$ is described implicitly
    by the vertex labels, so if $e$ is an edge of $\mcP$ with endpoints $v$ and
    $w$, then $h_E(e)$ is mapped to the edge of $D_{17}$ with endpoints
    $h_V(v)$ and $h_V(w)$. Edge orientations are not shown in
    \Cref{fig:D17pic}, but should be chosen to match the edge orientations we
    pick for $D_{17}$. With this choice, we can take $a(e)=1$ for all edges $e$.
    Finally, if $v$ is one of the vertices indicated in blue, we set $k(v)=0$.
    Otherwise we $k(v) = b(h(v))$. With these labellings, $\mcP$ is a picture
    for $\Gamma_p(I(D_{17}),b)$ with phase $2|b|$. 
    If $|b|=1$, then this picture shows that $J^2=1$ in $\Gamma_p(I(D_{17}),b)$ for
    all $2 \leq p \leq +\infty$, even though $D_{17}$ does not have a planar
    $2$-cover. 
    
    Note that if $v$ is a vertex of $\mcP$, then $h(N_\mcP(v)) \subseteq
    N_{D_{17}}(h(v))$, so $h_V$ is a graph homomorphism from $\mcP$ to $D_{17}$. 
    If $v$ is one of the vertices indicated in black, then $h(N_{\mcP(v)}) 
    = N_{D_{17}}(h(v))$, so $h_V$ looks like a cover at these vertices. However,
    if $v$ is one of the vertices indicated in blue, then $h(N_{\mcP(v)})$
    contains only two vertices of $N_{D_{17}}(h(v))$, so $h_V$ fails to be a
    cover at these vertices. 
\end{example}

Although the answer to \Cref{prob:forbidden} for $p>2$ does not exactly line
up with the non-projective planar graphs, it is interesting that that for all
the examples we know so far, if $|J|=p$ in $\Gamma_p(I(G),b)$ for $|b|=1$ and
some $p > 2$, then $|J|=p$ in $\Gamma_p(I(G),b)$ for all $2 \leq p \leq
+\infty$. This raises the question of whether a version of the $1$--$2$--$\infty$
conjecture holds for solution groups:
\begin{problem}\label{prob:negami}
    Is there a $\Z$-weighted connected graph $(G,b)$ with $|b|=1$ and
    $2 < p,q \leq +\infty$ such that
    such that $|J|=p$ in $\Gamma_p(I(G),b)$ but $|J| < q$ in $\Gamma_q(I(G),b)$?
\end{problem}
A similar question has been asked independently by van Dobben de Bruyn and
Roberson \cite{vDdBR}.  There are other interesting variations of
\Cref{prob:forbidden}. For instance, if the answer to \Cref{prob:negami} is
negative, we can ask for the minors characterizing the property that $|J|=p$ in
$\Gamma_p(I(G),b)$ for all $2 \leq p \leq +\infty$.

\bibliographystyle{amsalpha}
\bibliography{small}
\end{document}